\newcommand{\ie}{\emph{i.e.}, }
\newcommand{\eg}{\emph{e.g.}, }
\newcommand{\dlpt}{\textsc{Dlpt}\xspace}
\newcommand{\copif}{\textsc{CoPIF}\xspace}
\newcommand{\copifs}{\textsc{CoPIFs}\xspace}
\renewenvironment{proof}{{\it Proof. } }{{\hfill $\Box$}\vspace{.5pc}}
 \newtheorem{lem}{Lemma}
\title{Optimization in a Self-Stabilizing Service Discovery Framework for Large Scale Systems%
}
\titlerunning{Optimization in Self-Stabilizing Service Discovery}  
\author{
Eddy Caron\inst{1}
\and 
Florent Chuffart\inst{1}
\and 
Anissa Lamani\inst{2}
\and
Franck Petit\inst{3}
}
\institute{
University of Lyon. LIP Laboratory. UMR CNRS - ENS Lyon - INRIA - UCB Lyon 5668, France
\and
MIS Lab. Universit\'e de Picardie Jules Verne, France
\and
 LIP6 CNRS UMR 7606 - INRIA - UPMC Sorbonne Universities, France
}
\authorrunning{Eddy Caron {\it et al}.}   
\begin{document}

\maketitle  


\begin{abstract} 
  Ability to find and get services is a key requirement 
  in the
  development of large-scale distributed systems.  We consider dynamic
  and unstable environments, namely Peer-to-Peer (P2P) systems.  In
  previous work, we designed a service discovery solution called
  Distributed Lexicographic Placement Table (\dlpt), based on a
  hierarchical overlay structure.  A self-stabilizing version was
  given using the Propagation of Information with Feedback (PIF)
  pa\-ra\-digm. In this paper, we introduce the self-stabilizing \copif
  (for Collaborative PIF) scheme.  An algorithm is provided with its
  correctness proof.  We use this approach to improve a distributed
  P2P framework designed for the services discovery.  Significantly
  efficient experimental results are presented.
\end{abstract}


\section{Introduction}

Computing abilities (or {\em services}) offered by large distributed
systems are constantly increasing. Cloud environment grows in this
way.  Ability to find and get these services (without the need for a
centralized server) is a key requirement 
in the development of such
systems.  
Service discovery facilities in distributed systems led to the development of
various overlay structures built over Peer-to-Peer (P2P) systems, \eg
\cite{
  CDT06,MAAN,SchmidtP04,ShuOTZ05}. 
   Some
of them rely on spanning tree
structures~\cite{CDT06,ShuOTZ05}, mainly to handle range
queries, automatic completion of partial search strings, and to extend
to multi-attribute queries.

Although fault-tolerance is a mandatory feature of systems targeted
for large scale platforms (to avoid data loss and to ensure proper
routing), tree-based distributed structures, including tries, offer only 
a poor robustness in dynamic environment. The crash of one or more nodes 
may lead to the loss of stored objects, and may split the tree into several subtrees. 

The concept of self-stabilization~\cite{SSDolev} is a general technique to
design distributed systems that can handle arbitrary transient faults.
A self-stabilizing system, regardless of the initial state of the
processes and the initial messages in the links, is guaranteed to converge
to the intended behavior in finite time.

In~\cite{CDPT08}, a self-stabilizing message passing
protocol to maintain prefix trees over practical P2P networks is introduced.
%
%
The protocol is based on self-stabili\-zing PIF ({\em Propagation of
  Information with Feedback}) waves that are used to evaluate the tree
maintenance progression.  The scheme of PIF 
can be informally described as follows: a node, called {\em
  initiator}, initiates a PIF wave by broadcasting a message {\em m}
into the network.  Each non-initiator node acknowledges to the
initiator the receipt of {\em m}. The wave terminates when the root
has received an acknowledgment from all other nodes.  In arbitrary
distributed systems, any node may need to initiate a PIF wave. Thus,
any node can be the initiator of a PIF wave and several PIF protocols
may run concurrently (in that case, every node maintains locally a data
structure per initiator).

\paragraph{\textbf{Contribution}.} 
We first present the scheme of {\em collaborative} PIF (referred as \copif ). 
The main thrust of this scheme is to ensure that different waves may collaborate to improve the 
overall parallelism of the mechanism of PIF waves. In other words, the waves merge together so 
that they do not have to visit parts of the network already visited by other waves. 
Of course, this scheme is interesting in environments were several PIF waves may run concurrently. 
Next, we provide a self-stabilizing \copif protocol with its correctness proof. 
To the best of our knowledge, it is the first self-stabilizing solution for this problem.  
Based on the snap-stabilizing PIF algorithm in~\cite{DBLP:journals/dc/BuiDPV07},
it merges waves initiated at different points in the network.
In the worst case where only one PIF wave runs at a time, our scheme does not 
slow down the normal progression of the wave.  
Finally, we present experimental results showing the efficiency of our scheme 
use in a large scale P2P tree-based overlay designed for the services discovery.

\paragraph{\textbf{Roadmap}.}
The related works are presented in Section~\ref{sec:RelatedWork}. Section~\ref{sec:P2PServiceDiscoveryFramework} provides the conceptual and computational models of our framework. 
In Section~\ref{sec:Algorithm}, we present and prove the correctness of our self-stabilizing 
collaborative protocol. 
In Section~\ref{sec:Eval},
experiments show the benefit of the \copif approach. 
Finally, 
concluding remarks are given in Section~\ref{sec:Conclusion}.

\section{Related Work}
\label{sec:RelatedWork}

\subsection{Self-stabilizing Propagation of Information}

PIF wave algorithms have been extensively proposed in the area of 
self-sta\-bi\-li\-za\-tion, \eg \cite{AG90,AKMPV93,DBLP:journals/dc/BuiDPV07,DBLP:conf/icdcs/CournierDPV02,V94a} 
to quote only a few. 
Except \cite{AKMPV93,DBLP:conf/icdcs/CournierDPV02,V94a}, all the above solutions 
assume an underlying self-stabilizing rooted spanning 
tree construction algorithm.  
The solutions in \cite{DBLP:journals/dc/BuiDPV07,DBLP:conf/icdcs/CournierDPV02} have the extra desirable 
property of being snap-stabilizing.
A \emph{snap-stabilizing protocol} guarantees that the system always maintains 
the desirable behavior.  
This property is very useful for wave algorithms and other algorithms
that use PIF waves as the underlying protocols.
The basic idea is that, regardless of the initial configuration of the system, 
when an initiator starts a wave, the messages and the 
tasks associated with this wave will work as expected in a normal computation.
A snap-stabilizing PIF is also used in \cite{CDPT10} to propose a snap-stabilizing service 
discovery tool for P2P systems based on prefix tree. 

\subsection{Resource Discovery}

The resource discovery in P2P environments has been intensively
studied~\cite{Meshkova:2008}. Although DHTs~\cite{CAN,Pastry,Chord}
were designed for very large systems, they only provide rigid
mechanisms of search. A great deal of research went into finding ways
to improve the retrieval process over structured peer-to-peer
networks. Peer-to-peer systems use different technologies to support
multi-attribute range
queries~\cite{mercury,MAAN,SchmidtP04,ShuOTZ05}. Trie-structured
approaches outperform others in the sense that logarithmic (or
constant if we assume an upper bound on the depth of the trie) latency
is achieved by parallelizing the resolution of the query in several
branches of the trie.

\subsection{Trie-based related work}
 
Among trie-based approaches, Prefix Hash Tree (PHT)~\cite{pht:podc04}
dynamically builds a trie of the given key-space (full set of possible
identifiers of resources) as an upper layer mapped over any DHT-like
network. Fault-tolerance within PHT is delegated to the DHT
layer. Skip Graphs, introduced in~\cite{AspnesS2003}, are similar to
tries, and rely on skip lists, using their own probabilistic
fault-tolerance guarantees.  P-Grid is a similar binary trie whose
nodes of different sub-parts of the trie are linked by shortcuts like
in Kademlia~\cite{MM02}. The fault-tolerance approach used in
P-Grid~\cite{pgrid} is based on probabilistic replication.

In our approach, the \dlpt was initially designed for the purpose of
service discovery over dynamic computational grids and aimed at
solving some drawbacks of similar previous approaches. An advantage of
this technology is its ability to take into account the heterogeneity
of the underlying physical network to build a more efficient tree
overlay, as detailed in~\cite{CDT08}.

\section{P2P Service Discovery Framework}

\label{sec:P2PServiceDiscoveryFramework}

In this section we present the conceptual model of our P2P service discovery framework and the \dlpt data structure on which it is based.
Next, we convert our framework into the computational model on which our proof is based.

\subsection{Conceptual Model} 
\label{sec:ConcMod}

\emph{The two abstraction layers} that compose our P2P service discovery framework are organized as follow: 
($i$) a $P2P$
network which consists of a set of asynchronous peer (physical
machines) with distinct identifiers. The peer communicate by
exchanging messages. Any peer $P1$ is able to communicate with
another peer $P2$ only if $P1$ knows the identifier of $P2$. The
system is seen as an undirected graph $G=(V,E)$ where $V$ is the set
of peers and $E$ is the set of bidirectional communication
link; ($ii$) an overlay that is built on the P2P system, which is
considered as an undirected connected labeled tree $G'=(V',E')$ where
$V'$ is the set of nodes and $E'$ is the set of links between
nodes. Two nodes $p$ and $q$ are said to be neighbors if and only if
there is a link $(p,q)$ between the two nodes. To simplify the
presentation we refer to the link $(p,q)$ by the label $q$ in the code
of $p$. The overlay can be seen as an indexing system whose nodes are mapped onto the peers of the network. 
Henceforth, to avoid any confusion, the word {\em node} refers to a node of the tree overlay, {\it i.e.}, a logical entity, whereas the word \emph{peer} refers to a physical node part of the $P2P$ system.


\emph{Reading and writing features} of  our service discovery framework are ensured as follow. Nodes are indexed with service name and resource locations are stored on nodes. So, client requests are treated by any node, rooted to the targeted service labeled node along the overlay abstraction layer, indexed resource locations are returned to the clients or updated . A more detailed description of the implementation of our framework is given in~\cite{CIT2011} and briefly reminded in Section~\ref{sec:Sbam}.

The \emph{Distributed Lexicographic Placement Table} (\dlpt~\cite{CDT06,CDT08}) 
is the hierarchical data structure that ensures request routing across overlay layer. 
\dlpt belongs to the category of overlays that are distributed prefix trees, {\em e.g.}, ~\cite{AspnesS07,RamabhadranRHS04,AbererCDDHPS03}. 
Such overlays have the desirable property of efficiently supporting range queries by parallelizing the searches in 
branches of the tree and exhibit good complexity properties due to the
limited depth of the tree.
More particularly, \dlpt is based on the particular 
\emph{Proper Greatest Common Prefix Tree} (\textsc{Pgcp} tree) overlay structure.
\emph{A Proper Greatest Common Prefix} Tree ({\em a.k.a} radix tree in~\cite{Morrison68}) is a labeled rooted tree such that the following properties are true for every node of the tree:
($i$) the node label is a proper prefix of any label in its subtree;
($ii$) the greatest common prefix of any pair of labels of children of a given node are the same and equal to the node label.

Designed to evolve in very dynamic systems, the \dlpt integrates
a self-stabilization mechanisms~\cite{CDPT08}, providing the
ability to recover a functioning state after arbitrary transient failures.
As such, the truthfulness of information returned to the client needs to be guaranteed. 
We use the PIF mechanism to check whether \dlpt is currently in a recovering phase or not.


\subsection{Computational Model}
\label{ref:CompMod}

In a first step, we abstract the communication model to ease the reading and the explanation of our solution.  
We assume that every pair of neighboring nodes communicate in the overlay by direct reading of variables. 
So, the program of every node consists in a set of shared variables (henceforth referred to as variables) and a finite number of actions. 
Each node can write in its own variables and read its own variables and those of its neighbors. Each action is constituted as follow:
%
$<Label>::<Guard> \rightarrow <Statement>$. 
The guard of an action is a Boolean expression involving the variables of $p$ and its neighbors. The statement is an action which updates one or more variables of the node $p$. Note that an action can be executed only if its guard is true. Each execution is decomposed into steps. 
Let $y$ be an execution and $A$ an action of $p$ ($p$ $\in$ $V$). $A$ is {\em enabled} for $p$ in $y$ if and only if the guard of $A$ is satisfied by $p$ in $y$. Node $p$ is enabled in $y$ if and only if at least one action is enabled at $p$
in $y$.

The state of a node is defined by the value of its variables. The state of a system is the product of the states of all nodes. The local state refers to the state of a node and the global state to the state of the system. Each step of the execution consists of two sequential phases atomically executed:
($i$) Every node evaluates its guard;
($ii$) One or more enabled nodes execute their enabled actions. 
When the two phases are done, the next step begins.

Formal description (Section~\ref{sec:Formal}) and proof of correctness 
of the proposed collaborative propagate information feedback algorithm will be done using this computational model.
Nevertheless, experiments are implemented using the classical message-passing model over an actual peer-to-peer
system~\cite{BCC+06,DrostCaCPE10}.

\section{Collaborative Propagation of Information with Feedback Algorithm}
\label{sec:Algorithm}

In this section, we first present an overview of the proposed Collaborative Propagation of Information with Feedback Algorithm (\copif).
Next, we provide its formal description.

\subsection{Overview of the \copif}

\begin{figure}[t]
  \centering
  \subfigure[\label{fig:PIF_Broadcast}Broadcast step.]{\mbox{\includegraphics[trim = 0mm 0mm 0mm 0mm, clip, width=.49\linewidth]{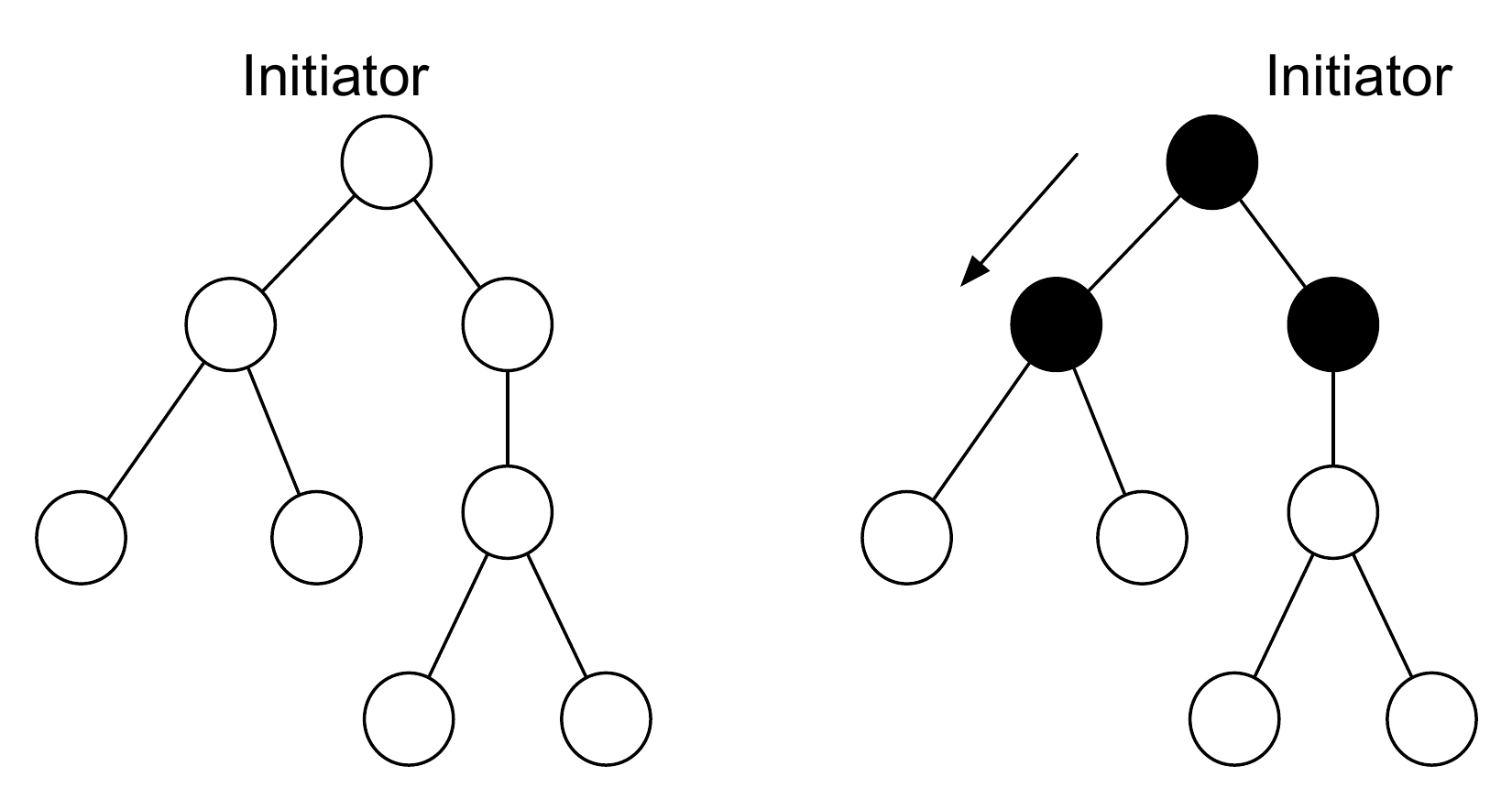}}}
  \subfigure[\label{fig:PIF_Feedback}Feedback step.]{\mbox{\includegraphics[trim = 0mm 0mm 0mm 0mm, clip, width=.49\linewidth]{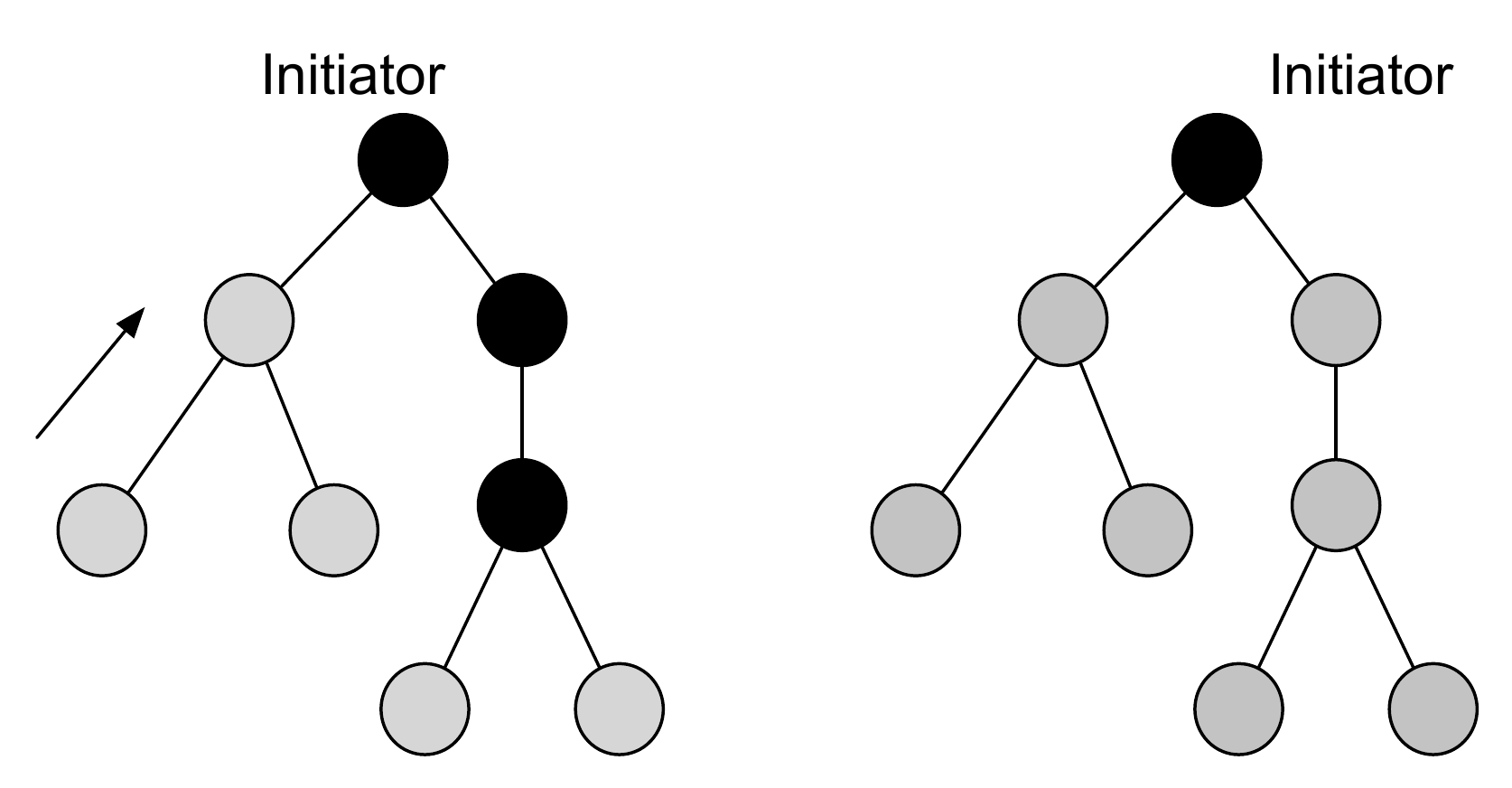}}}
  \caption{\label{fig:PIF}The PIF Wave.}
\end{figure}

Before explaining the idea behind \copif, let us first recall the well-known PIF wave execution. 
Starting from a configuration where no message has been broadcast yet, a node, also called \emph{initiator}, initiates
the broadcast phase and all its descendant except the leaf participate in this task by sending also the broadcast
message to their descendants. 
Once the broadcast message reaches a leaf node of the network, they notify their ancestors of the end of the broadcast
phase by initiating the feedback phase. 
During both broadcast and feedback steps, it is possible to collect information or perform actions on the entire data structure. 
Once all the nodes of the structure have been reached and returned the feedback message, the initiator retrieve
collected information and executes a special action related to the termination of the PIF-wave. In the sequel, we will refer to this mechanism as {\it classic}-PIF.



However, the PIF mechanism is a costly broadcast mechanism that involves the whole platform. 
In this paper, we aim to increase the parallelism of the PIF by making several PIF waves collaborating together. 
Let us now define some notions that will be used in the description of our solution:

Let an ordered alphabet $A$ be a finite set of letters. 
Lets define $\prec$ an order on $A$. 
A non empty word $w$ over $A$ is a finite sequence of letters $a_1$, ... , $a_i$, ..., $a_l$ such as $l>0$. 
The concatenation of two words $u$ and $v$, denoted as $uv$, is equal to the word $a_1$, ..., $a_i$,  ..., $a_k$, $b_1$, ..., $b_j$, ..., $b_l$ such that
$u=a_1$, ..., $a_i$,  ..., $a_k$ and $v=b_1$, ..., $b_j$, ..., $b_l$. 
	A word $u$ is a prefix (respectively, proper prefix) of a word $v$ if there exists a word $w$ such that $v = uw$  (respectively, $v=uw$ and $u\ne v$). 
The {\em Greatest Common Prefix} (respectively, {\em Proper Greatest Common Prefix}) of $w_1$ and $w_2$, denoted $GCP(w_1, w_2)$ (respectively $PGCP(w_1,w_2)$, is the longest prefix $u$ shared by $w_1$ and $w_2$ (respectively, such that $\forall i \geq 1$, $u\ne w_i$).

Let us now describe the outline of the proposed solution through the P2P framework use case.

\paragraph{\textit{\textbf{Use Case}}.} 
The idea of the algorithm is the following: When a user is looking for a service, it sends a request to the \dlpt to check whether the service exists or not. 
Once the request is on one node of the \dlpt, it is routed according to the labelled tree in the following manner: let $l_{request}$ be the label of the service requested by the user and let $l_p$ be the label of the current node $u_{p}$. 
In the case of $PGCP(l_{request},l_p)$ is true,  $u_{p}$ checks whether there exists a child $u_{q}$ in the \dlpt having a label $l_q$ such that $PGCP(l_{request},l_q)$ 
is satisfied. 
If such a node exists, then $u_{p}$ forwards the request to its child $u_{q}$. 
Otherwise ($PGCP(l_{request},l_p)$ is not satisfied), if we keep exploring the sub-tree routed in $u_p$, the service will not be found. 
$u_p$ sends in this case the request to its father node in the \dlpt. 
By doing so, either 
($i$) the request is sent to one node $u_p$ such that $l_p=l_{request}$, or 
($ii$) the request reaches a node $u_p$ such that it cannot be routed anymore.
In the former case, the service being found, a message containing the information about the service is sent to the
user.  In the latter case, the service has not been found and the message ``no information about the service'' is sent to the user.
However, the node has no clue to trust the received information or not. 
%
In other words, in the former case,  
$u_p$ does not know whether it contains the entire service information or if 
a part of the information is on a node being at a wrong position in the tree due to transient faults. 
In the latter case, $u_p$ does not know whether the service is really not supported by the system or if the service
is missed because it is at a wrong position. 

In order to solve this problem, $u_p$ initiates a PIF wave to check the state of all the nodes part of the \dlpt. 
Note that several PIF waves can be initiated concurrently since many requests can be made in different parts of
the system.  The idea of the solution is to make the different PIF waves collaborating in order to check whether 
the tree is under construction or not. 
For instance, assume that two PIF waves, $PIF1$ and $PIF2$, are running concurrently on two different parts of the
tree, namely on the subtrees $T1$ and $T2$, respectively. 
Our idea is to merge $PIF1$ and $PIF2$ so that $PIF1$ (respectively, $PIF2$) do not traverse $T2$ (resp., $T1$) by
using data collected by $PIF2$ (resp.$PIF1$).  Furthermore, our solution is required to be self-stabilizing. 

\paragraph{\textit{\textbf{\copif.}}} Basically, the \copif scheme is a mechanism enabling the collaboration between different PIF waves. 
Each node $u_p$ of the \dlpt has a state variable $S_p$ that includes three parameters
$S_p=(Phase,id_{f},id_{PIF})$.  Parameter $id_{PIF}$ refers to the identifier of the PIF wave which consists of the couple ($id_{peer}$,$l_{u_i}$), where $id_{peer}$ is the identifier of the peer hosting the node $u_i$ that initiated the PIF wave and $l_{u_i}$ is the label of the node $u_i$. 
The value $id_{f}$ refers to the identifier of the neighbor from which $u_p$ received the broadcast. 
It is set at NULL in the case $u_p$ is the initiator.  
$Phase$ can have four values: $C$, $B$, $FC$ and $FI$. The value $C$ (\emph{Clean}) denotes the initial state of any node before it participates in a PIF wave. 
The value $B$ (\emph{Broadcast}) or $FC$ (\emph{Feedback correct}) or $FI$ (\emph{Feedback incorrect}) means that the node is part of a PIF wave. 
Observe that in the case there is just a single PIF wave that is executed on the \dlpt, then its execution is similar to the previously introduce {\it classic}-PIF. 

When more than one PIF wave are executed, four cases are possible while the progression of the
\copif wave. 
First~($i$), if there is a node $u_p$ in the $C$ state having only
one neighboring node $q$ in the $B$ state and no other neighboring node in the $FI$ or $FC$ state, then $p$ changes
its state to $B$. Second~($ii$), if there exists a leaf node $u_p$ in the $C$ state having a neighbor $u_q$ in the $B$
state, then $u_p$ changes its state to $FC$ (resp. $FI$) if its position in the \dlpt is correct (resp. incorrect). 
Next~($iii$), if there is a node $u_p$ in $C$-phase having two neighboring nodes $u_q$ and $u_{q'}$ in the $B$ state with
different $id_{PIF}$ then, $u_p$ changes its state to $B$ and sets its $id_{f}$ to $u_q$ such that the $id_{PIF}$ of
$u_q$ is smaller than $id_{PIF}$ of $u_{q'}$.  Finally~($iv$), if there exists a node $u_p$ that is already in the $B$ 
state such that its $id_{f}$ is $u_q$ and
there exists another neighboring node $u_{q'}$ which is in the $B$ state with a smaller $id_{PIF}$ and a different
$id_{f}$, then $u_p$ changes its father by setting $id_{f}$ at $u_{q'}$. 

Notice that in the fourth cases, $u_q$ (previously, the $id_{f}$ of $u_p$) will have to change its
$id_{f}$ as well since it has now a neighbor $u_p$ in the $B$ state with a smaller $id_{peer}$. 
By doing so, the node $u_i$ that initiated a PIF wave with a smaller $id$ will change its $id_{peer}$. 
Similarly, notice that $u_i$ is not an initiator anymore. Hence it changes its $id_{f}$ from NULL to the 
$id$ of its neighbor with a smallest $id_{PIF}$. So, only one node will get the answer (the feedback of the \copif),
this node being the one with the smallest $id_{PIF}$.  Therefore, when an initiator sets its $id_{f}$ to a value
different from NULL (as $u_i$ previously), it sends a message to the new considered initiator (can be deduced from
$id_{PIF}$) to subscribe to the answer.  So, when an initiator node receives the feedback that indicates the state of the
tree, it notifies all its subscribers of the answer.

\begin{figure}
  \centering
  \mbox{\includegraphics[trim = 40mm 64mm 40mm 75mm, clip, width=.95\linewidth]{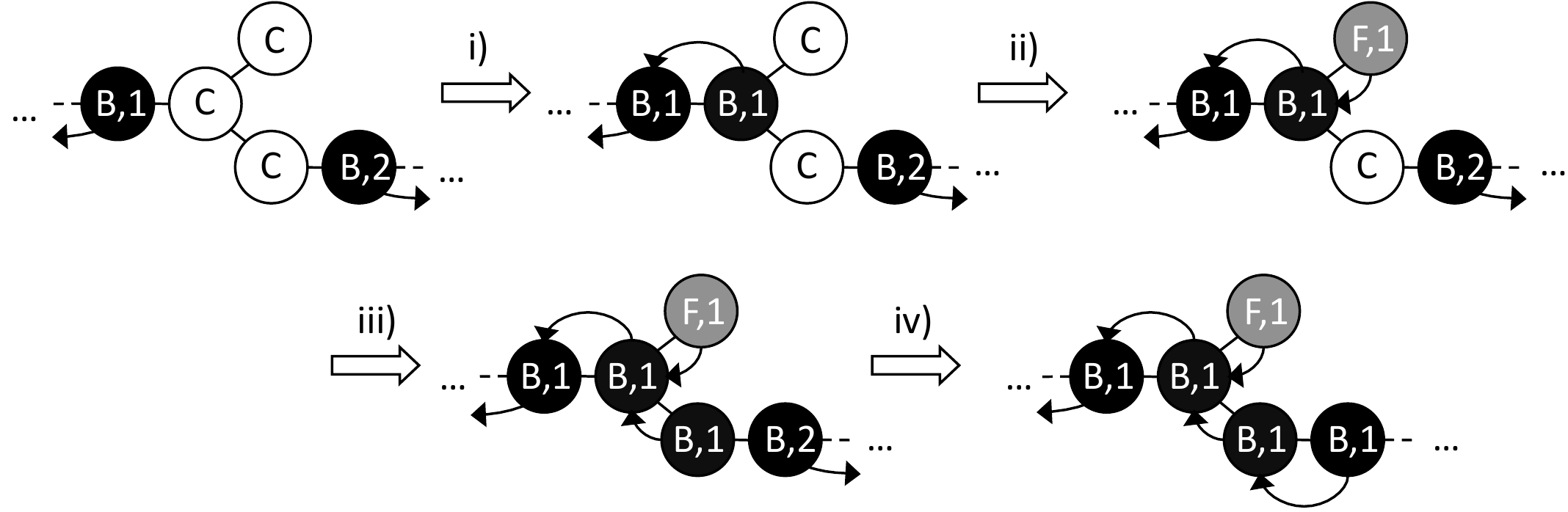}}
  \caption{\label{fig:CoPIF4case} 4 CoPIF wave transition.}
\end{figure}


	
\subsection{Formal Description}
\label{sec:Formal}

In the following we first define the data and variables that are used for the description of our algorithm. We then present the formal description in Algorithm~\ref{algo:PIF}.

\begin{itemize}
   \item {Predicates}
             \begin{itemize}
               \item $Request_{PIF}$: Set at true when the peer wants to initiate a PIF wave (There is a $Service-Request$ which could not find the desired service).
            \end{itemize}
   \item {Variables}     
              \begin{itemize}
                \item $S_p=(A,q,q')$: refers to the state of the node $p$ such that: A corresponds to the phase of the
PIF wave $p$ is in. $A \in \{B, FI, FC, C\}$ for respectively Broadcast, Feedback State-Incorrect, Feedback
State-Correct, Clean. q refers to the identity of the peer that initiates the PIF wave. q' refers to the identity of
the neighboring node of $p$ in the \dlpt from which $p$ got the Broadcast.
                \item $N_p$: refers to the set of the identities of the nodes that are neighbor to $p$
                \item $State_{DLPT}$: refers to the state of the \dlpt
                 \item $min_p$: $q$ $\in N_p$, $S_q=(B,id_q,z)$ $\wedge$ $z\ne p$ $\wedge$ $id_p=min$\{$id_{q'}$, $q'\in N_p$, $S_{q'}=(B,id_{q'},z')$ $\wedge$ $z'\ne p$\}.
             \end{itemize}  
  \item{Functions}
                 \begin{itemize}
                   \item Send(@dest,@source, Msg): @source sends the message {\em Msg} to @dest.
                   \item Add(Mylist, item): Add to my list the subject {\em item}.
                 \end{itemize}
\end{itemize}

 Character {\tt '-'} in the algorithm means {\em any value}.

\begin{algorithm}[htbp]
\caption{\copif \label{algo:PIF}}
\begin{scriptsize}

   \begin{itemize}
   
\item{\textbf{PIF Initiation}}   
    \begin{itemize}
       \item $\textbf{R1}$: $Request_{PIF}$ $\wedge$ $S_p=(C,-,-)$ $\wedge$ $\forall$ $q\in N_p$, $S_q\ne (-,-,p)$ $\rightarrow$ $S_p=(B,(id_{peer},l_p),NULL)$, $State_{DLPT}=Unknown$
       \item  $\textbf{R2}$: $Request_{PIF}$ $\wedge$ $S_p=(B,id,q)$ $\wedge$ $q\ne NULL$ $\rightarrow$ $Send(@id, idpeer, Interested)$, $State_{DLPT}=Unknown$\\
    \end{itemize}
   
\item{\textbf{Broadcast propagation}}
\begin{itemize}
\item$\textbf{R3}$: $S_p=(C,-,-)$ $\wedge$ $\neg Request_{PIF}$ $\wedge$ $\exists$ $q \in N_p$, ($S_q=(B,k,-)$ $\wedge$ $q=min_p$ $\wedge$ $\neg \exists$ $q'\in N_p$, $q\ne q'$, $S_{q'}=(B,k,-)$) $\wedge$ $\forall$ $q''\in N_p$, $S_{q''}\ne$ $(FI \vee FC,-,p)$ $\rightarrow$ $S_p=(B,k,q)$\\
\end{itemize}

\item\textbf{Father-Switch}
\begin{itemize}
\item$\textbf{R4}$: $\exists$ $q \in N_p$, $S_p=(B,id,q)$ $\wedge$ $\exists$ $q' \in N_p$, ($q\ne q'$ $\wedge$ $S_{q'}=(B,id',?)$ $\wedge$ $id'<id$ $\wedge$ $q=min_p$ ) $\rightarrow$ $S_p=(B,id',q')$\\
\end{itemize}

\item\textbf{initiator resignation}
\begin{itemize}
\item $\textbf{R5}$: $S_p=(B,id,NULL)$ $\wedge$ $\exists$ $q \in N_p$, $S_{q}=(B,id',?)$ $\wedge$ $id'<id$ $\wedge$ $q=min_p$) $\rightarrow$ $S_p=(B,id',q)$, $Send(@id, id'peer, Interested)$
\end{itemize}
\item\textbf{Feedback initiation}
    \begin{itemize}
    \item $\textbf{R6}$: $|N_p|=1$ $\wedge$ $State=Correct$ $\wedge$ $\exists$ $q\in N_p$, $S_q=(B,id,?)$ \textbf{$\rightarrow$} $S_p=(FC,id,q)$
    \item $\textbf{R7}$: $|N_p|=1$ $\wedge$ $State=Incorrect$ $\wedge$ $\exists$ $q\in N_p$, $S_q=(B,id,?)$ \textbf{$\rightarrow$} $S_p=(FI,id,q)$\\
    \end{itemize}
    
\item\textbf{Feedback propagation}
\begin{itemize}
\item $\textbf{R8}$: $\exists$ $q\in N_p$, $S_q=(B,id,-)$ $\wedge$ $S_p=(B,-,q)$ $\wedge$ $\forall$ $q'\in N_p/\{q\}$, $S_q'=(FC,-,p)$ \textbf{$\rightarrow$} $S_p=(FC,id,q)$ 
\item $\textbf{R9}$: $\exists$ $q\in N_p$, $S_q=(B,id,-)$ $\wedge$ $S_p=(B,-,q)$ $\wedge$ $\forall$ $q'\in N_p/\{q\}$, $S_q'=(FI \vee FC,-,p)$ $\wedge$ $\exists$ $q''\in N_p/\{q\}$  $S_q'=(FI,?,p)$ \textbf{$\rightarrow$} $S_p=(FI,id,q)$ \\
\end{itemize}

\item\textbf{Cleaning phase initiation}
\begin{itemize}
\item $\textbf{R10}$: $\forall$ $q\in N_p$, $S_q=(FC,id',p)$ $\wedge$ $S_p=(B,id,NULL)$ $\wedge$ $id=(id_{peer},l_p)$ \textbf{$\rightarrow$} $State_{DLPT}=Correct$, $Request_{PIF}=false$, $S_p=(C,NULL,NULL)$, $Send(@ListToContact$, 'DLPT Correct'), $State_{DLPT}=Unknown$
\item $\textbf{R11}$: $\forall$ $q\in N_p$, $S_q=(FI \vee FC,id',p)$ $\wedge$ $S_p=(B,id,NULL)$ $\wedge$ $id=(id_{peer},l_p)$ $\wedge$ $\exists$ $q''\in N_p/\{q\}$  $S_q'=(FI,-,p)$ \textbf{$\rightarrow$} $State_{DLPT}=Incorrect$, $Request_{PIF}=false$, $S_p=(C,NULL,NULL)$, $Send(@ListToContact$, 'DLPT Incorrect'), $State_{DLPT}=Unknown$\\
\end{itemize}

\item\textbf{Cleaning phase propagation}
\begin{itemize}
\item $\textbf{R12}$: $\exists$ $q \in N_p$, $S_p=(FI \vee FC, id,q)$ $\wedge$ ($S_q=(C,-,-)$ $\vee$ $q=NULL$) \textbf{$\rightarrow$} $S_p=(C, NULL, NULL)$\\
\end{itemize}

\item \textbf{Correction Rules}
\begin{itemize}
\item $\textbf{R13}$: $S_p=(B,id,NULL)$ $\wedge$ $id\ne (idpeer,l_p)$ \textbf{$\rightarrow$} $S_p=(C,NULL,NULL)$ 
\item $\textbf{R14}$: $\exists q \in N_p$ $S_p=(FI \vee FC,id,q)$ $\wedge$ $\exists$ $q' \in N_p$, $q\ne q'$ $\wedge$ $S_{q'}\ne(FI \vee FC,-,-)$ \textbf{$\rightarrow$} $S_p=(C,NULL, NULL)$
\item $\textbf{R15}$: $S_p=(B,id,q)$ $\wedge$ ($S_q\ne(B,-,-)$ $\vee$ [$S_q\ne(B,id',-)$ $\wedge$ $id'>id$]  \textbf{$\rightarrow$}  $S_p=(C,NULL,NULL)$  
\item $\textbf{R16}$: $S_p=(B,-,q)$ $\wedge$ $S_q=(FI \vee FC,-,-)$ \textbf{$\rightarrow$} $S_p=(C,NULL,NULL)$
\item $\textbf{R17}$: $\exists q\in N_p$, $S_p=(B,id,q)$ $\wedge$ $S_q=(B,id,p)$ \textbf{$\rightarrow$} $S_p(C,NULL,NULL)$
\item $\textbf{R18}$: $\exists$ $q,q'\in N_p$, $S_p=(B,id,q)$ $\wedge$ $q\ne q'$ $\wedge$ $S_{q'}=(B,id,z)$ $\wedge$ $z\ne p$ \textbf{$\rightarrow$} $S_p=(F,id,q)$
\item $\textbf{R19}$: $\exists$ $q,q'\in N_p$, $S_p=(C,NULL,NULL)$ $\wedge$ $S_{q}=(B,id,z)$ $\wedge$ $z\ne p$ $\wedge$ $S_{q'}=(B,id,z')$ $\wedge$ $z'\ne p$ \textbf{$\rightarrow$} $S_p=(F,id,q)$
\item $\textbf{R20}$: $\exists$ $q,q'\in N_p$, $S_p=(B,id,q)$ $\wedge$ $S_{q}=(B,id',z)$ $\wedge$ $z\ne p$ $\wedge$ $id'<id$ $\wedge$  $S_{q'}=(B,id'',z')$ $\wedge$ $z'\ne p$ $\wedge$ $id'<id''$ \textbf{$\rightarrow$} $S_p=(B,id',q)$\\
\end{itemize}

\item \textbf{Event: Message reception}
         \begin{itemize}
           \item Message 'idpeer,Interested': $Add(ListToContact,idpeer)$
           \item Message 'Contact id for an answer': $Send(@id, idpeer, Interested)$
           \item Message 'DLPT Correct': $State_{DLPT}=Correct$
           \item Message 'DLPT Incorrect': $State_{DLPT}=Incorrect$
        \end{itemize}
\end{itemize}        
   \end{scriptsize}
\end{algorithm}

\subsection{Correctness Proof.}



In the following, we prove the correctness of our algorithm. 

Let us first define our self-stabilizing $CoPIF$ wave:
\begin{definition}($CoPIF$ wave)\\
A finite computation $e$ is called $CoPIF$ wave if and only if the following conditions hold:
\begin{itemize}
\item Each node in the system is able to initiate a $PIF$ wave in a finite time.
\item All the nodes of the system are visited by $CoPIF$ wave.
\item Exactly one node of the system (that initiated a $PIF$ wave) receives the acknowledgement from all the other nodes. 
\end{itemize}
\end{definition}
%

Let us now define some notions that will be used later.

\begin{definition} (abnormal sequences of type $A$)\\
We say that a configuration contains an abnormal sequence of type $A$ if there exists a node $p$ of state $S_p=(B,id,q)$ such as one of these conditions holds:

\begin{enumerate}
\item $q=NULL$ $\wedge$ $id\ne (id_{peer},lp)$.
\item $q\ne NULL$ $\wedge$ $S_q=(B,id',z)$ $\wedge$ $id'>id$ $\wedge$ $z\ne p$.
\item $q\ne NULL$ $\wedge$ $S_q=(B,id,p)$.
\item $q\ne NULL$ $\wedge$ $S_q\ne (B,-,-)$.
\end{enumerate}

In the following we refer to each case by type $Ai$ where $1\leq i \leq 4$.
%

\end{definition}

\begin{definition} (abnormal sequences of type $B$)\\
We say that a configuration contains an abnormal sequence of type $B$ if there exists a node $p$ of state $S_p=(C,NULL,NULL)$ such as $\exists$ $q,q' \in N_p$, $S_q=(B,id,z)$ $\wedge$ $S_{q'}=(B,id,z')$ $\wedge$ $z\ne p$ $\wedge$ $z'\ne p$.
\end{definition}

\begin{definition}(Dynamic abnormal sequence)\\
We say that a configuration contains a dynamic abnormal sequence if there exists two nodes $p$ and $q$ such as $S_p=(B,id,z)$ with $z \ne q$ and $S_p=(B,id,z')$ with $z' \ne p$.
\end{definition}

\begin{definition}(Trap sequence)\\
We say that a configuration contains a trap sequence if there exists a sequence of nodes $p_0$, $p_1$, ..., $p_k$ such that the following three conditions hold: $(i)$ $S_{p_{0}}=(B,id,z)$ $\wedge$ $z\ne p_1$. $(ii)$ $S_{p_{k}}=(B,id,z')$ $\wedge$ $z\ne p_{k-1}$. $(iii)$ $\forall$ $1\leq i \leq k-1$, $S_{p_{i}}=(C,NULL, NULL)$. 
\end{definition}

\begin{definition} (Path)\\
The sequence of nodes $P_{id}=p_0$, $p_1$, ..., $p_k$ is called a path  if $\forall$ $1\leq i \leq k$, $S_{p_{i}}=(B,id,p_{i-1})$ $\wedge$ $S_{p_{0}}=(B,id,NULL)$. $p_0$ is said to be the extremity of the path.
\end{definition}

\begin{definition} (FullPath)\\
For any node $p$ such as $S_p\ne (C,NULL,NULL)$, a unique path $PN= p_0$, $p_1$, ..., $p_k$ is called FullPath, if and only if  $\forall$ $1\leq i \leq k$, $S_{p_{i}}=(B \vee F, -,p_{i-1})$ $\wedge$ $S_{p_{0}}=(B\vee F,-,p)$. $p$ is said to be the extremity of the path.
\end{definition}

\begin{definition} (SubTree)\\
For any node $p$, we define a set of $SubTree(p)$ of nodes as follow: for any node $q$, $q$ $\in SubTree(p)$ if and only if $p$ and $q$ are part of the same FullPath such as $p$ is the extremity of the path.
\end{definition}

In the following, we say that a node $p$ clean its state if it updates its state to $S_p=(C,NULL,NULL)$.


Let us first show that a configuration without abnormal sequences of type $A$ and $B$ is reached in a finite time.

\begin{lem}\label{AbnSeq}
No abnormal sequence of type $A$ can be created dynamically.
\end{lem}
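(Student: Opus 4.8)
The statement is a closure (invariance) property: if a configuration $\gamma$ contains no abnormal sequence of type $A$, then no single step can lead to a configuration $\gamma'$ that does. The plan is to argue by contradiction. I would assume that $\gamma$ is type-$A$-free but that the step $\gamma\to\gamma'$ creates a type-$A$ sequence, witnessed by a node $p$ with $S_p^{\gamma'}=(B,id,q)$ satisfying one of $A1$--$A4$. Since each of $A1$--$A4$ constrains only $p$ and its father pointer $q$, the only state changes that matter are those of $p$ and of $q$; and because a step reads every guard in $\gamma$ and only then fires the chosen actions atomically, I must allow both $p$ and $q$ to move in the same step. Two structural facts drive the argument: (i) the only rules that assign a $B$-phase state are $R1,R3,R4,R5,R20$ (all other rules produce $C$, $FC$ or $FI$, i.e.\ none yields a $B$-phase state); and (ii) the correction rules whose guard is exactly a type-$A$ pattern, namely $R13$ ($A1$), $R15$ ($A2$/$A4$), $R16$ ($A4$) and $R17$ ($A3$), are all disabled in $\gamma$ by hypothesis, so only the constructive and feedback rules can fire. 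These two observations are what I would establish first.

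The three ``static-father'' directions are then routine. For $A1$, the only rule writing a state of the form $(B,-,NULL)$ is $R1$, and it necessarily sets $id=(id_{peer},l_p)$, so the defining inequality $id\ne(id_{peer},l_p)$ cannot hold. For $A2$, each father-assigning rule makes $p$ adopt a neighbour $q$ whose identifier in $\gamma$ equals the one $p$ itself takes ($R3$), or is strictly smaller and then copied into $p$'s own id ($R4,R5,R20$); since no rule ever raises a node's identifier (every id-changing rule assigns a strictly smaller value or the node's own $(id_{peer},l_p)$), $q$ cannot lift its identifier above that of $p$ within the same step, and the ``$z\ne p$'' clause built into $min_p$ forbids $p$ from attaching to a neighbour whose father is $p$. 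Hence $A2$ can arise neither by $p$ moving nor by $q$ moving under a static $p$ (in the latter case a pre-existing $A2$ at $p$ would already witness abnormality of $\gamma$).

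The real difficulty is the simultaneous-move analysis for $A3$ and $A4$, where $p$ attaches to $q$ in the very step $q$ changes its state. For $A3$ I would rule out the formation of a $2$-cycle $p\leftrightarrow q$ carrying a common id: when $p$ moves by $R3$ its target $q=min_p$ has father $z\ne p$, so $q$ would have to redirect its pointer to $p$ simultaneously, which is impossible because $R3$ needs $S_p^{\gamma}$ clean whereas the rules letting $q$ point at $p$ require $p$ already in phase $B$; and the mutual-redirection case ($R4$/$R5$/$R20$ at both endpoints) is contradictory through the strict inequalities $id'<id$, since each endpoint would then have to carry both a larger and a smaller id than the other. For $A4$ the obstacle is sharper: $p$ attaches to $q$ while $q$ leaves $B$. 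The cleaning rules $R10,R11$ and the correction rules $R15$--$R17$ are excluded (the former demand that \emph{all} neighbours of $q$, including $p$, already sit in a feedback state; the latter are disabled in $\gamma$), leaving a feedback transition $R8$/$R9$ at $q$ as the genuinely delicate subcase, for it forces every non-father neighbour of $q$ into an $FC$/$FI$ state pointing at $q$. I expect this to be the crux: one shows that, with $S_p^{\gamma}\in\{(C,-,-),(B,id,q)\}$ as forced by $p$'s move or by $p$ being static, $p$ cannot meet the ``non-father neighbour in feedback'' premise of $R8$/$R9$, so the only escape is that $p$ is $q$'s father, which recreates a $2$-cycle and, by the identifier bookkeeping together with the absence of type-$A$ sequences in $\gamma$, is contradictory. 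Discharging this feedback subcase cleanly --- tracking identifiers through $R8$/$R9$ and matching them against $min_p$ and the hypothesis --- is the step I anticipate will demand the most care.
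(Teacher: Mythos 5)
Your decomposition is the same as the paper's: treat $A1$--$A4$ separately and, for each, inspect which rules can produce the offending pattern at a node $p$ with $S_p=(B,id,q)$. Your $A1$ and $A2$ arguments coincide with the paper's ($R1$ is the only writer of $(B,-,NULL)$ and forces $id=(id_{peer},l_p)$; the father-assigning rules $R3$, $R4$, $R5$ always copy an identifier that is equal to or smaller than the father's, and no rule ever increases an identifier). Where you genuinely go further is in taking the step semantics seriously: the paper's proof of $A3$ only considers the case where one endpoint starts clean and silently ignores the mutual-redirection scenario in which both $p$ and $q$ are already in phase $B$ and switch fathers in the same step, which you rule out via the strict inequalities $id'<id$ at both endpoints. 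For $A4$, however, the situation is reversed: the paper disposes of it in one sentence by appealing to ``the properties of the PIF algorithm'' of the cited snap-stabilizing PIF, i.e.\ it asserts that a broadcast node's father is always in phase $B$ without re-deriving this for the modified rule set, whereas you correctly isolate the delicate subcase ($q$ executing $R8$/$R9$ and leaving $B$ while $p$ attaches to or remains attached to $q$) but explicitly leave it undischarged. That subcase is a real obligation, not a formality: you must check that the premise of $R8$/$R9$ at $q$ (every non-father neighbour of $q$, hence $p$, already in a feedback state pointing at $q$) is incompatible with $p$ simultaneously being or becoming a $B$-child of $q$, and the argument closes only because a node in state $(C,-,-)$ or $(FC/FI,-,q)$ cannot fire $R3$/$R4$/$R5$ toward $q$ in that same step. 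So your proposal is more honest and more rigorous than the paper's proof on $A3$, but until you actually write out the $A4$ feedback subcase it is no more complete than the paper's on $A4$ --- it has merely replaced an unjustified citation with an explicitly acknowledged hole in the same place.
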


\begin{proof}
Let consider each abnormal sequence of type $A$ separately: 

\begin{enumerate}
\item 
 Type $A1$. Note that the only rule in the algorithm that allows $p$ to set its state to $S_p=(B,id,NULL)$ is $R1$. Note also that when $R1$ is executed on $p$, $id=(id_{peer},lp)$. Thus we are sure that the abnormal sequence $A1$ is never created dynamically. 
\item 
Type $A2$. Note that the only case where $p$ sets its state to $S_p=(B,id,q)$ such as $q\ne NULL$ is when $S_q=(B,id',z)$ such as $id'<id$ and $z\ne p$ (see Rules $R3$, $R4$ and $R5$). Thus we are sure that no abnormal sequence of type $A2$ is created dynamically. 
\item 
Type $A3$. To create this abnormal sequence dynamically either $p$ or $q$ or both are in the clean state $(C,NULL,NULL)$. In the three cases the node that has a clean state $(C,NULL,NULL)$ (let this node be $p$) never change its state to set it to the broadcast phase $(B,id,q)$ when $S_{q}=(C,NULL,NULL)$ $\vee$ $S_{q}=(B,id,p)$. Thus we are sure that no abnormal sequence of type $A3$ is created dynamically.  
\item  
Type $A4$. The properties of the $PIF$ algorithm ensure that when a node $p$ is in a broadcast phase $(B,id,q)$ (refer
to \cite{DBLP:journals/dc/BuiDPV07}), then $S_q=(B,-,-)$. Thus we are sure that no abnormal sequence of type $A4$ is created dynamically.  
\end{enumerate}
\end{proof}

From Lemmas \ref{AbnSeq}, we can deduce that the number of abnormal sequences does not increase.

\begin{lem}
Every execution of Algorithm \ref{algo:PIF} contains a suffix of configurations
containing no abnormal sequence of type $A1$.
\end{lem}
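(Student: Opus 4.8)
The plan is to combine the ``no dynamic creation'' guarantee of Lemma~\ref{AbnSeq} with the dedicated correction rule $R13$, and to argue that every offending node is cleaned in finite time and never reappears. First I would observe that a type-$A1$ abnormal sequence is purely local: it consists of a single node $p$ whose state is $S_p=(B,id,NULL)$ with $id\ne (id_{peer},l_p)$. Consequently the lemma reduces to showing that the (finite) set of nodes currently in such a state empties out and stays empty. Using Lemma~\ref{AbnSeq} I get the crucial monotonicity fact: since no type-$A1$ sequence can be created dynamically, the number of nodes in this state never increases along an execution, and once a node leaves the state it can never re-enter it. This reduces the convergence claim to showing that each such node leaves the state at least once.

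The central step is then a case analysis on which actions are enabled at a node $p$ with $S_p=(B,id,NULL)$ and $id\ne(id_{peer},l_p)$. I would go through the guards one by one and record that: the rules quantifying over a father $q\in N_p$ (namely $R4$, $R15$, $R16$, $R17$, $R18$, $R20$) are vacuous because $NULL\notin N_p$; rules $R10$ and $R11$ are disabled since they require $id=(id_{peer},l_p)$; and the rules that remain enabled are exactly the correction rule $R13$ (whose guard matches the $A1$ state verbatim and depends only on $p$'s own variables), together with $R5$ and, if $p$ is a leaf adjacent to a broadcasting node, $R6$ or $R7$. Crucially, every one of these rules moves $p$ out of the $A1$ state---$R13$ cleans $p$ to $(C,NULL,NULL)$, while $R5$ installs a non-$NULL$ father and $R6$, $R7$ switch $p$ to a feedback phase---and by Lemma~\ref{AbnSeq} none of them can regenerate an $A1$ state. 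Because the guard of $R13$ is purely local, it stays continuously enabled as long as $p$ remains in the $A1$ state.

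I would then close the liveness part: under a weakly-fair daemon a continuously enabled node is eventually activated, so $p$ eventually executes one of the enabled actions above and thereby exits the $A1$ state for good. Since $V$ is finite, each of the finitely many $A1$ nodes present in the arbitrary initial configuration is cleaned after finitely many steps, and by monotonicity no fresh $A1$ node ever appears. Hence the execution has a suffix of configurations in which no type-$A1$ abnormal sequence is present.

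The main obstacle I anticipate is the guard case analysis: one must check carefully that \emph{no} applicable rule can leave $p$ in (or return it to) an $A1$ state, which requires treating the $q=NULL$ father entry consistently so that guards binding a father $q\in N_p$ simply do not fire. The second point to make explicit is the fairness hypothesis on the daemon, which is what turns ``$R13$ is continuously enabled'' into ``$R13$ (or another $A1$-removing rule) is eventually executed''; without it the step-wise monotonicity alone would not yield a clean suffix.
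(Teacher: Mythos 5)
Your proof is correct and follows essentially the same route as the paper's: the paper likewise argues that $R13$ is enabled at any node in an $A1$ state and cleans it to $(C,NULL,NULL)$, with Lemma~\ref{AbnSeq} guaranteeing no new $A1$ sequences arise. Your version is simply more explicit than the paper's two-line argument, spelling out the guard case analysis for the other rules and the fairness assumption needed to turn continuous enabledness into eventual execution.
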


\begin{proof}
Note that in a case of an abnormal sequence of type $A1$, there exists a node $p$ whom state $Sp=(B,id,q)$ such as $q=NULL$ $\wedge$ $id\ne (id_{peer},lp)$. Note that $R13$ is enabled on $p$. When the rule is executed on $p$, $S_p=(C,NULL,NULL)$ and the Lemma holds. 
\end{proof}

\begin{lem}
Every execution of Algorithm \ref{algo:PIF} contains a suffix of configurations
containing no abnormal sequence of type $A2$.
\end{lem}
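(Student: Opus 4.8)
The plan is to isolate the unique correction rule that dissolves type-$A2$ inconsistencies, namely $R15$, to show that it is enabled at the node witnessing any $A2$ sequence, and then to combine this with Lemma~\ref{AbnSeq} (no $A2$ sequence is created dynamically) through a counting argument: the number of $A2$ witnesses never increases and strictly decreases while it is positive, so it must reach zero in finite time and stay there, which gives the announced suffix.

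First I would make explicit what an $A2$ witness is: a node $p$ with $S_p=(B,id,q)$, $q\ne NULL$, whose father $q$ satisfies $S_q=(B,id',z)$ with $id'>id$ and $z\ne p$. I would then read the guard of $R15$ and observe that it triggers exactly when $p$ is in the broadcast phase while its designated father $q$ is either not broadcasting or is broadcasting with a strictly larger identifier. The second alternative is precisely the $A2$ situation, so $R15$ is enabled at $p$, and its statement sets $S_p=(C,NULL,NULL)$, destroying that particular $A2$ sequence.

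Next I would settle non-creation and monotonicity. By Lemma~\ref{AbnSeq} no rule can create an $A2$ sequence, hence along any execution the number of $A2$ witnesses is non-increasing. Cleaning $p$ can only downgrade a node $w$ that pointed to $p$ into a witness of a \emph{different} type, since $w$'s father $p$ leaves the $B$ phase; it never produces a fresh $A2$ witness, again by Lemma~\ref{AbnSeq}. As the overlay is finite, each configuration contains finitely many $A2$ witnesses.

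The main obstacle is to rule out that the daemon starves an $A2$ witness forever by always activating other nodes, so that the decrease never materialises. I would resolve this by checking that $R15$ stays \emph{continuously} enabled at $p$ for as long as $p$ remains an $A2$ witness. Its guard can be falsified only by a change of $S_p$ or of $S_q$. Inspecting the rules that can rewrite $S_p=(B,id,q)$, namely $R4$, $R8$, $R9$ and $R18$, one sees that each either cleans $p$, moves $p$ into a feedback phase, or makes $p$ adopt a father with an equal or smaller identifier; in every case $p$ ceases to be an $A2$ witness. Likewise, a change of $S_q$ either keeps $q$ in the $B$ phase with $id'>id$ (so $R15$ stays enabled) or lowers $q$'s identifier to at most $id$ or removes $q$ from the $B$ phase, both of which already break the $A2$ relation between $p$ and $q$. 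Thus, while an $A2$ witness persists, some node is continuously enabled at it, and by fairness of the daemon it must eventually move, strictly decreasing the $A2$ count. Iterating over the finitely many witnesses and invoking non-creation, after finitely many steps every configuration is $A2$-free, which is the desired suffix.
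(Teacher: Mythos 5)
Your proposal is correct and follows essentially the same route as the paper: identify $R15$ as the rule whose guard matches an $A2$ witness and whose statement cleans it, then combine with Lemma~\ref{AbnSeq} (non-creation) to conclude. The paper's own proof is just the one-line observation that $R15$ is enabled and cleans $p$; your additional work showing that $R15$ remains continuously enabled while the witness persists and that the witness count is finite and non-increasing fills in the fairness and termination details the paper leaves implicit.
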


\begin{proof}
Note that in a case of an abnormal sequence of type $A2$, there exists a node $p$ whom state $Sp=(B,id,q)$ such as $q\ne NULL$ $\wedge$ $S_q=(B,id',z)$ $\wedge$ $id'>id$ $\wedge$ $z\ne p$. Note that $R15$ is enabled on $p$. When the rule is executed $S_p=(C,NULL,NULL)$ and the Lemma holds. 
\end{proof}

\begin{lem}
Every execution of Algorithm \ref{algo:PIF} contains a suffix of configurations
containing no abnormal sequence of type $A3$. 
\end{lem}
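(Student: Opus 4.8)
The plan is to follow the template already used for types $A1$ and $A2$: identify the unique correction rule whose guard coincides with the type-$A3$ condition, show it fires, and invoke Lemma~\ref{AbnSeq} for monotonicity. Recall that a type-$A3$ sequence is a node $p$ with $S_p=(B,id,q)$, $q\ne NULL$, whose father $q$ points back with the same identifier, $S_q=(B,id,p)$. This mutual father relation is precisely the guard of rule $R17$. Hence the first thing I would note is that $R17$ is enabled at $p$ (and, since the condition is symmetric in $p$ and $q$, at $q$ as well); executing it at either endpoint sets that node to $(C,NULL,NULL)$, destroying the relation and thus the abnormal sequence.

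For convergence I would argue as follows. By Lemma~\ref{AbnSeq}, no abnormal sequence of type $A$---in particular of type $A3$---is ever created dynamically, so the number of type-$A3$ sequences in a configuration is non-increasing along the execution. It therefore suffices to show each one is removed in finite time. While the pair $(p,q)$ keeps satisfying $S_p=(B,id,q)\wedge S_q=(B,id,p)$, rule $R17$ remains continuously enabled at $p$, so under the daemon it is eventually executed (at $p$ or at $q$), leaving the offending node Clean. A short induction on the finite, non-increasing count of type-$A3$ sequences then produces a suffix of configurations none of which contains such a sequence, which is exactly the claim.

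The step I expect to require the most care is establishing that monotonicity really holds, i.e. that no step can recreate a type-$A3$ sequence through side effects. The delicate case is when $R17$ cleans $p$: its former father $q$ is left in state $(B,id,p)$ with a now-Clean neighbour $p$, which could in principle let $p$ re-adopt $q$ as its father via $R3$. Here I would appeal to the definition of $min_p$, whose defining clause requires the candidate father to satisfy $z\ne p$; a neighbour in state $(B,id,p)$ fails this test, so $R3$ cannot set $S_p=(B,id,q)$ while $S_q=(B,id,p)$. This is precisely the fact underlying Lemma~\ref{AbnSeq} for type $A3$, and it is what guarantees that removing one sequence never spawns another. Closing this gap completes the argument.
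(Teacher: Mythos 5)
Your proof takes essentially the same route as the paper's: identify $R17$ as the correction rule whose guard matches the type-$A3$ condition, note it is enabled and that executing it cleans the node, and rely on Lemma~\ref{AbnSeq} for the fact that no new such sequence appears. Your version is in fact somewhat more careful than the paper's (which omits the fairness and non-recreation discussion), but the argument is the same.
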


\begin{proof}
Note that in a case of an abnormal sequence of type $A3$, there exists a node $p$ whom state $Sp=(B,id,q)$ such as $q\ne NULL$ $\wedge$ $S_q=(B,id,p)$. Note that $R17$ is enabled on $p$. When the rule is executed $S_p=(C,NULL,NULL)$ and the Lemma holds. 
\end{proof}

\begin{lem}
Every execution of Algorithm \ref{algo:PIF} contains a suffix of configurations
containing no abnormal sequence of type $A4$.
\end{lem}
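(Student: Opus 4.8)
The plan is to open with the same local move that settled $A1$, $A2$ and $A3$, and then add the global termination argument that this type needs. Unfolding the definition, a type $A4$ witness is a node $p$ with $S_p=(B,id,q)$, $q\ne NULL$ and $S_q\ne(B,-,-)$, i.e. a broadcasting node whose designated father sits in phase $C$, $FC$ or $FI$. This is exactly the first disjunct of the guard of $R15$ (with $R16$ covering the sub-case $S_q=(FI\vee FC,-,-)$), so $R15$ is enabled at every such $p$, and firing it sets $S_p=(C,NULL,NULL)$, erasing that particular witness.

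The step where $A4$ genuinely departs from the three previous types --- and the part I expect to be the real obstacle --- is that this erasure is not innocuous, so the terse ``the rule fires, hence done'' argument does not close here. On one hand, if $p$ has $B$-children $r$ (nodes with $S_r=(B,-,p)$), then once $p$ becomes $C$ each such $r$ acquires a non-$B$ father and is promoted to a \emph{fresh} $A4$ witness. On the other hand, rules $R3$ and $R4$ let a clean node graft itself onto a neighbouring broadcast without ever testing its legitimacy, so a spurious broadcast can even grow into clean territory. Consequently neither the number of witnesses nor the number of broken $B$-nodes is monotone, and one must read Lemma~\ref{AbnSeq} as asserting only that a node's \emph{own} transition never manufactures an $A4$ witness, not that a father leaving phase $B$ cannot turn a child into one.

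To defeat this I would argue at the level of the \emph{abnormal broadcast trees} hanging below the witnesses, rather than at the level of individual witnesses. The two mechanisms above can only propagate such a tree strictly towards the leaves: a clean node grafts onto a broadcast through a $B$-neighbour, never upward past a non-$B$ father, and a leaf has no descendant onto which to relay it. Since the overlay is a finite tree and, once type $A3$ is gone, its father pointers contain no cycle, each abnormal tree has depth bounded by the tree height while $R15$ simultaneously erodes it from the top. The remaining content is to show that no abnormal tree can be sustained indefinitely by re-feeding, after which a well-founded ranking of the configurations (by the sizes and depths of the abnormal trees) strictly decreases until none survives, i.e. until no $A4$ witness remains.

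The delicate point, on which I would lean most heavily on the snap-stabilizing $PIF$ of~\cite{DBLP:journals/dc/BuiDPV07} underlying our scheme, is precisely this impossibility of perpetual re-feeding: one must rule out the scenario in which nodes cleaned from the top of an abnormal tree are endlessly re-grafted at its bottom, keeping the tree alive forever. I would import the extremal ``topmost abnormal tree'' analysis of~\cite{DBLP:journals/dc/BuiDPV07}, which shows that abnormal trees are destroyed strictly faster than they can be regenerated, and use it to certify the ranking before concluding the existence of the $A4$-free suffix.
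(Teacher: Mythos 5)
Your proposal takes essentially the same route as the paper: $R15$ is enabled at every type-$A4$ witness, cleaning it promotes its $B$-children to fresh witnesses, and the erosion therefore propagates down the finite chain of father pointers until it runs out at the leaves. The paper's own proof is in fact terser than yours --- it simply walks $R15$ down the descendant sequence $n_1, n_2, \dots$ and invokes finiteness of the tree, without addressing the re-feeding scenario via $R3$/$R4$ that you single out as the delicate point, so your version is, if anything, the more careful of the two.
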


\begin{proof}
 Let $S=n_1,n_2,...,n_k$ be the sequence of node on the tree overlay such as $S_{n_1}=(B,-,p)$ and $S_{n_i}=(B,-,n_{i-1})$. Note that in a case of an abnormal sequence of type $A3$, there exists a node $p$ whom state $Sp=(B,id,q)$ such as $q\ne NULL$ $\wedge$ $S_q\ne (B,-,-)$. In this case $R15$ is enabled on $p$. When the rule is executed $S_p=(C,NULL,NULL)$. Note that the state of $n_1$ is the same as $p$ is the previous round, thus when $n_1$ executes $R15$, $S_{n_1}=(C,NULL,NULL)$. $R15$ becomes then enabled on $n_{2}$ and so on. 
 Thus we are sure to reach in a finite time a configuration without any abnormal sequence of type $A4$.
\end{proof}

\begin{lem} \label{noAbsB}
Every execution of Algorithm \ref{algo:PIF} contains a suffix of configurations
containing no abnormal sequence of type $B$.
\end{lem}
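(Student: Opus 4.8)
The plan is to follow the same template as the preceding lemmas: exhibit the correction rule whose guard coincides exactly with the abnormal pattern, show that firing it destroys the pattern, and then argue that no such pattern survives forever. The relevant rule here is $R19$, whose guard asks for a node $p$ with $S_p=(C,NULL,NULL)$ together with two neighbours $q,q'\in N_p$ such that $S_q=(B,id,z)$, $S_{q'}=(B,id,z')$, $z\ne p$ and $z'\ne p$; this is word-for-word the definition of an abnormal sequence of type $B$. Hence, whenever a configuration contains a type $B$ abnormal sequence, $R19$ is enabled on the trapped clean node $p$, and when $R19$ is executed, $p$ leaves the clean state and becomes $S_p=(F,id,q)$, so the type $B$ sequence centred at $p$ disappears.

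Next I would verify that this step introduces no fresh abnormal sequence. Since $R19$ moves $p$ into a feedback state $F$ rather than into the broadcast state $B$, it adds no new broadcast neighbour to any node, so it cannot by itself turn a second clean node into a trapped one. It also creates no type $A$ sequence: by Lemma~\ref{AbnSeq} type $A$ sequences are never produced dynamically, and concretely, in a configuration already free of type $A4$ no broadcast node has the clean node $p$ as its father, so changing $p$ from $(C,NULL,NULL)$ to $(F,id,q)$ cannot falsify the property that every broadcast node has a broadcast father. Thus a single application of $R19$ strictly removes one type $B$ sequence while preserving the invariants already established.

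The hard part will be the global termination argument, namely showing that type $B$ sequences are not recreated indefinitely. The subtlety is that the broadcast-propagation rule $R3$ (and, with identifier changes, $R4$, $R5$ and $R20$) can make a clean node join the broadcast and thereby become the second same-$id$ broadcast neighbour of some other clean node, momentarily producing a new type $B$ pattern, and the cleaning rules can likewise return a node to the clean state next to two broadcast neighbours. Note that a type $B$ pattern cannot be produced by a single correctly propagating wave: since the overlay $G'$ is a tree, the two same-$id$ broadcast neighbours $q,q'$ of $p$ lie in the two distinct components obtained by deleting $p$, while their father chains stay inside broadcast nodes in a type-$A$-free configuration, so the pattern can only be a residue of the arbitrary initialization or of broadcast still spreading through the finite tree. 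To close the argument I would therefore fix a bounded measure that no rule increases and that $R19$ strictly decreases on trapped nodes — for instance the multiset of wave identifiers (which $R4$, $R5$ and $R20$ only lower and which is bounded below) together with the number of nodes not yet saturated by broadcast. Because broadcast can spread only finitely often in a finite tree and identifiers cannot decrease forever, this measure stabilises at a value where no trapped clean node remains; combined with the non-increase guaranteed by Lemma~\ref{AbnSeq}, this yields a suffix of the execution free of type $B$ abnormal sequences, which is the claim.
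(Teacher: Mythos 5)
Your proposal is correct and rests on the same key observation as the paper's own proof: the guard of rule $R19$ is exactly the definition of a type~$B$ abnormal sequence, so the trapped clean node $p$ is enabled and executing $R19$ moves it to $(F,id,q)$, destroying the pattern. The extra care you take about non-recreation and termination is material the paper does not put inside this lemma but defers to its separate treatment of dynamic abnormal sequences (Lemma~\ref{DynAbn}), so you are not diverging in substance, only packaging more of the argument locally.
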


\begin{proof}
Note that in the case of an abnormal sequence of type $B$ there is in the configuration at least one node $p$ such that the following properties hold: $S_p=(C,NULL,NULL)$ $\wedge$ $\exists$ $q,q' \in N_p$, $S_q=(B,id,z)$ $\wedge$ $S_{q'}=(B,id,z')$ $\wedge$ $z\ne p$ $\wedge$ $z'\ne p$. Note that in this case $R19$ is enabled on $p$. When the rule is executed, $S_p=(F,id,q)$ (the scheduler will make the choice between $q$ and $q'$). When all the nodes on which $R19$ is enabled execute $R19$. A configuration without 	any abnormal sequence of type $B$ is reached and the lemma holds.  
\end{proof}

Let us show now that a configuration without dynamic abnormal sequences is reached in a finite time. 
\begin{lem} \label{TrapConf}
If the configuration contains a trap sequence, then this sequence was already in the system in the starting configuration.
\end{lem}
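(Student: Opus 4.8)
The plan is to prove the stronger single-step statement that no transition of Algorithm~\ref{algo:PIF} turns a configuration without a trap sequence into one with a trap sequence, and then close by a trivial induction on the length of the execution. The essential structural tool I would exploit is that the overlay $G'$ is a tree, so the nodes $p_0,p_1,\dots,p_k$ of a trap sequence are forced to be the \emph{unique} tree path joining its two endpoints. Read this way, a trap sequence is exactly a witness that two broadcast nodes carry the same identifier $id$ yet the unique path between them has a clean interior; since $id=(id_{peer},l_{u_i})$ is unique to one initiator, two same-$id$ broadcast nodes belong to the same wave, and if that wave were connected the whole interior path would also be broadcast with $id$ — so a trap sequence is precisely a \emph{disconnection} of the same-$id$ broadcast set along that path. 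I would fix a step $c\to c'$, assume $p_0,\dots,p_k$ is a trap sequence in $c'$, and aim to exhibit a trap sequence already in $c$. Throughout I would work in the suffix guaranteed by the preceding lemmas, in which no abnormal sequence of type $A$ survives; this is what makes the case analysis tractable, because every correction rule that cleans a broadcast node (R13, R15, R16, R17) fires only on a type-$A$ node and is therefore disabled.

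The first group of cases is easy and fixes the shape of the argument. If an endpoint, say $p_0$, is newly set to $(B,id,z)$ with $z\ne p_1$ by the pure propagation rules R1 or R3, then either $id=(id_{peer},l_{p_0})$ is $p_0$'s own fresh identifier, in which case $p_k$ already carrying $id$ in $c$ makes the father chain of $p_k$ end at a non-proper initiator, i.e.\ a type-$A1$ sequence that by Lemma~\ref{AbnSeq} was present from the start; or $z$ is by the guard a broadcast neighbour already carrying $id$ with $z\ne p_1$ (as $p_1$ is clean), and using the absence of type-$A$ sequences the father of $z$ is a proper broadcast node distinct from $p_0$, so the extended path $z,p_0,p_1,\dots,p_k$ is already a trap sequence in $c$. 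The dual situation, in which the interior is completed by cleaning an intermediate node $p_j$ with $1\le j\le k-1$, is handled similarly: in the chosen suffix the only cleaning rules that can fire on a strictly interior node are the feedback-cleaning rules R12 and R14 (the termination rules R10, R11 clean only an initiator and are treated identically), so $p_j$ was a feedback node in $c$, and following its father pointer together with the already-clean part of the path exhibits one of $p_0,\dots,p_j$ or $p_j,\dots,p_k$ as a trap sequence already present in $c$.

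The real obstacle is the family of merging rules R4, R5 and R20, which keep a node in the broadcast phase while \emph{relabelling} its identifier to the smaller value carried by a neighbour. The danger is that a single identifier-lowering transition applied to a broadcast node sitting in the gap of an already disconnected same-$id$ set could relabel that node to the foreign value $id$, thereby manufacturing two same-$id$ fronts with a clean interior where none existed before. The invariant I would carry to defeat this is the combination of identifier uniqueness with the monotonicity forced by the absence of type-$A2$ sequences — a proper father never carries a strictly larger identifier — from which I would argue that the node being relabelled can only acquire $id$ from a neighbour already carrying it, so the disconnection of the $id$-wave along the unique tree path (and hence the trap sequence, up to shifting an endpoint to that neighbour) was already present in $c$. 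Making this bookkeeping airtight for R20, where two away-pointing broadcast neighbours are involved and the father pointer is retained while only the identifier drops, is where the substance of the proof lies; once every rule is shown to leave a trap sequence in the predecessor configuration, the induction back to the starting configuration gives the claim.
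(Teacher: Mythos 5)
Your overall strategy---show that no single transition creates a trap sequence and induct---is a more systematic rendering of what the paper does by contradiction, and the two key facts you lean on (uniqueness of $id_{PIF}$, and the fact that a node only ever acquires an identifier from a neighbouring broadcast node) are exactly the facts the paper's proof invokes. The cases you do close (R1/R3 extending an endpoint, cleaning rules completing the interior) are handled plausibly. The problem is that you explicitly defer the one case that carries all the difficulty---the identifier-lowering rules R4, R5 and R20---and the sketch you offer for it does not close. You argue that a relabelled node ``can only acquire $id$ from a neighbour already carrying it, so the disconnection of the $id$-wave along the unique tree path (and hence the trap sequence, up to shifting an endpoint to that neighbour) was already present in $c$.'' The parenthetical ``hence'' is the gap: a disconnection of the same-$id$ broadcast set is \emph{not} a trap sequence, because the definition requires every interior node of the separating path to be in state $(C,NULL,NULL)$. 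Concretely, suppose in $c$ a node $p$ holds $(B,id_5,\cdot)$, one neighbour $s$ holds $(B,id_3,\cdot)$ pointing away from $p$, and on $p$'s other side a clean path leads to a node $r$ with $(B,id_3,\cdot)$ pointing away. There is no trap sequence for $id_3$ in $c$ (the separator $p$ is broadcasting, not clean), yet a single application of R4 or R20 at $p$ relabels it to $id_3$ pointing at $s$, and $p,p_1,\dots,r$ is now a trap sequence in $c'$. Shifting the endpoint to $s$ does not rescue you, since the path from $s$ to $r$ passes through $p$, which was not clean in $c$. To close the case you must show such a configuration cannot arise without a pre-existing trap sequence elsewhere, which is precisely the bookkeeping you say ``is where the substance of the proof lies''---i.e., it is not done. (To be fair, the paper's own proof is equally silent on R20; but the lemma is not established by either text at this point.)

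A second, smaller issue: you restrict attention to the suffix in which no type-$A$ abnormal sequence survives, in order to disable R13--R17. The lemma, however, asserts that a trap sequence observed at any point was present in the \emph{starting} configuration, and the paper proves (and later uses, in the lemma on dynamic abnormal sequences) exactly that unrestricted form. Your argument at best yields that a trap sequence in the suffix was present at the beginning of the suffix, which is a strictly weaker statement; you would need either to extend the case analysis to the correction rules or to argue separately that none of them can complete a trap sequence during the prefix.
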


\begin{proof}
The proof is by contradiction. We suppose that the trap sequence can be created during the execution using the $PIF$ waves that were initiated after the starting configuration.\\
First of all note that since the identifier of each $PIF$ wave is unique, there is only one initiator for each $PIF$ wave. Thus, there is at most one $PIF$ wave that can be executed for each $id$ (considering only the $PIF$ waves that were executed after the initial configuration). On the other hand, when a $PIF$ wave is executed, the only rules that makes a node change its father $id$ are $R4$ and $R5$. Note that when one of these rules is enabled on $p$, $p$ has a neighboring node $q$ in the broadcast phase with a smaller $PIF$ $id$. When one of these rules is executed, $p$ changes both the identifier of its father and the identifier of the $PIF$ wave. Thus we are sure that if the configuration contains a trap sequence then, this sequence was already in the system in the starting configuration.
\end{proof}

\begin{lem}\label{DynAbn}
No dynamic abnormal sequence is created dynamically eventually. 
\end{lem}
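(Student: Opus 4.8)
The plan is to argue by contradiction in the same spirit as Lemma~\ref{TrapConf}, situating the whole argument in the suffix of configurations that, by the preceding lemmas, is already free of abnormal sequences of type $A$ and type $B$; there the broadcast fragments are well-formed and the only ``facing'' structures that could survive are the trap sequences inherited from the starting configuration. I would first fix the intended reading of the definition: a \emph{dynamic abnormal sequence} is a pair of neighboring nodes $p,q$, both in the broadcast phase with the same $id$, such that neither is the father of the other, i.e. $S_p=(B,id,z)$ with $z\neq q$ and $S_q=(B,id,z')$ with $z'\neq p$. Equivalently, it is the degenerate ($k=1$) ``closed-gap'' version of a trap sequence, in which the intermediate clean nodes have been absorbed into the broadcast. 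I would exploit two facts already at hand: every wave initiated after the starting configuration has a \emph{unique} initiator (one application of $R1$, whose identifier $(id_{peer},l_p)$ is unique per node), and the overlay $G'$ is an acyclic connected tree.

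Next I would isolate the structural invariant maintained by every rule that writes a broadcast state. Inspecting $R1$, $R3$, $R4$, $R5$ and $R20$, whenever a node $p$ is set to $S_p=(B,id,z)$ with $z\neq NULL$, the designated father $z$ is, at the instant of the action, a neighbor carrying that same $id$: for $R4$, $R5$ and $R20$ the adopted identifier is exactly that of the smaller-$id$ source $p$ now points to, and for $R3$ the guard $\neg\exists\, q'\in N_p,\ q\neq q',\ S_{q'}=(B,k,-)$ forces $z$ to be the \emph{unique} broadcast neighbor bearing that $id$. Hence a node entering (or switching within) the broadcast phase always points to a same-$id$ source neighbor and can never, by itself, manufacture a neighbor that shares its $id$ without being its father.

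I would then examine the step allegedly creating the pair $(p,q)$ and split on which node moves. If the common $id$ belongs to a wave initiated after the start, its broadcast region is connected in $G'$. Should $p$ fire $R3$ while its neighbor $q$ already carried the same $id$, the uniqueness clause of $R3$ is violated and $R3$ is disabled; the analogous min-$id$ requirement (through $min_p$) blocks $R4$, $R5$ and $R20$ as well, since these also force $p$ to point to its source. The remaining possibility is that $p$ and $q$ both leave the clean state in the same step. But then both were clean just before, so the connected $id$-region would have to be present on both sides of the tree edge $\{p,q\}$ without containing either endpoint, which is impossible in an acyclic tree, because deleting $\{p,q\}$ separates $z$'s side from $z'$'s side and no other path joins them. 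For any $id$ already present in the starting configuration, I would simply invoke Lemma~\ref{TrapConf}: the corresponding facing structure was present initially and is therefore not created dynamically. Combining the two cases yields the claim.

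The main obstacle I anticipate is making the concurrency of the scheduler compatible with the acyclicity argument: I must evaluate all guards on the \emph{pre-step} configuration and turn the ``both endpoints clean, region on both sides'' statement into a rigorous contradiction, rather than pretending the broadcast region grows one node at a time. A secondary subtlety is the father-switching rules $R4$ and $R5$, which lower a node's $id$; I would verify that adopting a strictly smaller $id$ and repointing to its source cannot transiently produce two facing same-$id$ neighbors, which again follows from the minimality encoded in $min_p$ together with the uniqueness guard of $R3$.
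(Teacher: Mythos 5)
Your structural analysis of the rules ($R3$, $R4$, $R5$, $R20$) and the tree-connectivity argument for waves initiated \emph{after} the starting configuration are sound and consistent with the paper's reasoning. The gap is in how you dispose of the identifiers inherited from the arbitrary initial configuration. You write that for such an $id$ ``the corresponding facing structure was present initially and is therefore not created dynamically,'' but this conflates the trap sequence (the facing pair $p_0,p_k$ with clean nodes in between, which Lemma~\ref{TrapConf} does trace back to the start) with the dynamic abnormal sequence itself (the degenerate $k=1$ case, two \emph{adjacent} facing broadcast nodes). As an initial trap sequence's gap of clean nodes shrinks, the last two clean nodes $p$ and $p'$ in the gap can both have $R3$ enabled in the same step --- each sees exactly one broadcast neighbor with that $id$, so the uniqueness clause of $R3$ does not block either of them --- and if the scheduler fires both, a \emph{new} dynamic abnormal sequence is genuinely created at that step. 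Your argument, taken literally, would prove that this never happens, which contradicts what actually occurs; the paper's own proof explicitly concedes this creation in its second case.

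What rescues the lemma (and what the word ``eventually'' in the statement is for) is a finiteness argument you omit: by Lemma~\ref{TrapConf} no new trap sequences arise, so only the finitely many trap sequences present in the initial configuration can collapse in this way, each contributing at most a bounded number of such creations; hence after finite time no further dynamic abnormal sequences are created. You need to replace your ``not created dynamically'' claim for initial $id$s with this counting argument. The paper's proof is otherwise organized differently but compatibly: it splits on whether the remaining gap has one clean node (then $R19$ sends it directly to feedback, so nothing is created) or two adjacent clean nodes (then simultaneous $R3$ executions may create a dynamic abnormal sequence, but only finitely often).
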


\begin{proof}
The two cases bellow are possible:
\begin{enumerate}
\item \label{Onenode} There is a node $p$ that has a clean state $S_p=(C,NULL,NULL)$ such as it has two neighboring nodes $q$ and $q'$ with $S_q=(B,id,z)$, $S_{q'}=(B,id,z')$, $z\ne p$ and $z'\ne p$. In this case $R19$ is enabled on $p$. When the rule is executed, $p$ sets its state directly to the feedback phase $S_p=(F,id,q \vee q')$ (the adversary will choose between $q$ and $q'$). Thus we are sure that no dynamic abnormal sequence is reached in this case.
\item \label{Twonodes} There are two neighboring nodes $p$ and $p'$
  such as the two following condition hold: $(i)$ $\exists q\in N_p$,
  $S_q=(B,id,z)$ $\wedge$ $z\ne p$ and $(ii)$ $\exists q'\in N_{p'}$,
  $S_{q'}=(B,id,z')$ $\wedge$ $z'\ne p'$. Note that on both $p$ and
  $p'$, $R3$ is enabled. When the rule is executed only on one node,
  we retrieve Case \ref{Onenode}. Thus, no dynamic abnormal
  configuration is reached. In the case $R3$ is executed on both $p$
  and $p'$ then a dynamic abnormal sequence is created. However,
  observe that the latter case cannot happen infinitely often since in a
  correct execution, it is impossible to reach a configuration where
  two $PIF$ waves with the same id are executed on two disjoint sub-trees
  (refer to Lemma \ref{TrapConf}). So we are sure that a limited number of dynamic abnormal sequences can be created (due to the arbitrary starting configuration). Thus we are sure that, after a finite time, no dynamic abnormal configuration is reached. 
\end{enumerate}
From the cases above, we can deduce that no dynamic abnormal sequence is created dynamically eventually and the lemma holds.
\end{proof}

\begin{lem}
Every execution of Algorithm \ref{algo:PIF} contains a suffix of configurations
containing no dynamic abnormal
sequences.
\end{lem}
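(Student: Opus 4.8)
The plan is to follow exactly the template already used for the abnormal sequences of types $A$ and $B$: establish that, beyond some point of the execution, no dynamic abnormal sequence is created, and then show that the finitely many dynamic abnormal sequences that may be present are all destroyed. The first half is nothing but Lemma~\ref{DynAbn}, which guarantees that after a finite time no new dynamic abnormal sequence appears. Hence it suffices to bound the number of dynamic abnormal sequences present at that point and to argue that each one is removed in finite time; taking the execution suffix after the last removal then gives a suffix free of dynamic abnormal sequences.

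For the bound I would invoke Lemma~\ref{TrapConf}. A dynamic abnormal sequence consists of two broadcast nodes carrying the same $id$, neither of which designates the other as father; along the unique tree path joining these two extremities this is realised as a trap sequence (two broadcast endpoints separated by a bridge of $(C,NULL,NULL)$ nodes). By Lemma~\ref{TrapConf} every such trap sequence was already present in the starting configuration, so, the tree being finite, there are only finitely many of them to eliminate. For the elimination I would track the length of the clean bridge separating the two broadcast fronts of a fixed dynamic abnormal sequence. Each interior clean node adjacent to exactly one broadcast neighbour of that $id$ is enabled for $R3$ and joins the broadcast, so the two fronts advance toward each other and the bridge strictly shrinks. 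When the bridge is reduced to a single clean node having two broadcast neighbours of the same $id$, neither pointing at it, $R3$ is disabled while $R19$ becomes enabled and turns that node into a feedback node; when the bridge closes completely, $R18$ injects a feedback state into one of the two adjacent broadcast nodes. In either case the feedback- and cleaning-propagation rules ($R8$--$R12$, together with the correction rule $R14$) then drive the affected nodes back to $(C,NULL,NULL)$, destroying the dynamic abnormal sequence. Combined with Lemma~\ref{DynAbn}, this yields the claimed suffix.

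The step I expect to be the main obstacle is turning the informal ``the bridge shrinks, then $R18$/$R19$ fires, then the cleanup completes'' into a genuinely monotone argument. One has to rule out interference from the identifier-resolution and father-switch rules ($R4$, $R5$, $R20$), which may relabel or re-root a front while the two fronts approach one another, and to verify that the feedback triggered by $R18$/$R19$ cleans the entire offending broadcast region without transiently re-creating a same-$id$ collision elsewhere in the tree. A robust way to close this gap is to exhibit a potential function --- for instance the total length of the clean bridges plus the number of broadcast nodes not lying on a well-formed path --- that never increases and strictly decreases as long as a same-$id$ collision persists, forcing progress toward a feedback-then-clean resolution.
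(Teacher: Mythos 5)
Your skeleton coincides with the paper's: invoke Lemma~\ref{DynAbn} to get a point after which no new dynamic abnormal sequence is created, then argue that the finitely many remaining ones are each destroyed by a correction rule, and take the suffix after the last destruction. Where you diverge is in what you take a ``dynamic abnormal sequence'' to be. The paper's definition (despite its typos) intends the two offending broadcast nodes $p$ and $q$ to be \emph{neighbors} --- this is exactly the object created when $R3$ fires simultaneously on both endpoints of an edge in Case~2 of Lemma~\ref{DynAbn} --- and its proof is then a one-liner: $R18$ is enabled on both $p$ and $q$, and as soon as it is executed on one of them that node moves to a feedback state, which breaks the pattern. You instead treat the two nodes as possibly separated by a bridge of clean nodes, which drags in $R3$, $R19$, Lemma~\ref{TrapConf} and a bridge-shrinking argument; that configuration is really the paper's separate ``trap sequence'' / type-$B$ object, already handled in Lemma~\ref{noAbsB} and Lemma~\ref{TrapConf}, and is not needed here. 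The consequence is that the one step you flag as the ``main obstacle'' (a potential function showing the fronts meet despite $R4$, $R5$, $R20$) is precisely the part of your argument that the paper never has to carry out: once the two nodes are adjacent, $R18$ disposes of them directly. So your proposal is not wrong, but it proves the lemma by a longer route and leaves unproven exactly the extra machinery it introduced; restricting to adjacent pairs, as the definition intends, collapses your argument to the paper's.
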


\begin{proof}
From Lemma \ref{DynAbn}, a limited number of dynamic abnormal sequences can be created. Let's consider the system when all the dynamic abnormal sequences have been created (no other Dynamic abnormal sequence can be created). Note that when the configuration contains a dynamic abnormal sequence, there are at least two nodes $p$ and $q$ that are neighbors such as $S_p=(B,id,z)$, $S_q=(B,id,z')$, $z\ne q$ and $z'\ne p$. Note that $R18$ is enabled on both $p$ and $q$. When the rule is executed on at least one of the two nodes (let this node be the node $p$), $S_p=(F,id,z)$ and the lemma holds.  
\end{proof}

From the lemmas above we can deduce that a configuration without any abnormal sequences of (type $A$ and $B$), any dynamic abnormal sequences is reached in a finite time. 


In the following we consider the system when there are no abnormal sequences (Dynamic, Type $A$ and $B$). We have the following lemma: 

\begin{lem}\label{PathChange}
Let $P_{id}= p_0$, $p_1$, ..., $p_k$ be a path. If there exist $1 \leq i \leq k$ such as $p_i$ changes its state to $S_{p_{i}}=(B,id',q)$ with $q \ne p_{i-1}$ and $id'<id$ then, $\forall$ $0 \leq j \leq i$, $p_j$ will also updates their state to $S_{p_{j}}=(B,id', p_{j+1})$ in a finite time. 
\end{lem}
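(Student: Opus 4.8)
The plan is to prove the claim by a \emph{backward cascade} along the path: the state change at $p_i$ forces $p_{i-1},p_{i-2},\ldots,p_0$ to switch, one after another, and the whole cascade terminates in finite time. Concretely, I would run a downward induction on the index $j$, from $j=i-1$ to $j=0$, whose inductive hypothesis is that $p_{j+1}$ has already reached a state $(B,id',\cdot)$ whose father field points away from $p_j$. This hypothesis is exactly the conclusion of the lemma for the next index; and for the base case $j+1=i$ it is precisely the hypothesis of the lemma, since $p_i$ switched to $(B,id',q)$ with $q\neq p_{i-1}=p_j$ (here the target $p_{i+1}$ appearing in the statement is understood as this node $q$). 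It therefore suffices to treat the indices $0\le j\le i-1$.

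For the inductive step I would fix such a $j$ and use that, before $p_j$ has switched, the \emph{Path} structure still gives $S_{p_j}=(B,id,p_{j-1})$ when $j\ge 1$ and $S_{p_j}=(B,id,NULL)$ when $j=0$. By the inductive hypothesis, $p_j$ now has the neighbour $p_{j+1}$ in broadcast phase with the strictly smaller identifier $id'<id$ and with a father distinct from $p_j$. I would then argue that $R4$ (father-switch) becomes enabled on $p_j$ when $j\ge1$, and that $R5$ (initiator resignation) becomes enabled on $p_0$; executing either rule sets $S_{p_j}=(B,id',p_{j+1})$, which is the desired state, with $R5$ additionally emitting the subscription message to the new initiator, as needed for the wave carrying the smaller identifier to absorb the old one.

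The delicate point, and the step I expect to be the main obstacle, is verifying the $min_p$ side condition in the guards of $R4$ and $R5$: these rules fire only toward the neighbour realising the minimum $id_{PIF}$ among broadcast neighbours whose father is not $p_j$. I would discharge this by working, as the preceding lemmas permit, in a configuration free of abnormal sequences of type $A$ and $B$ and of dynamic abnormal sequences. In such a configuration the only broadcast neighbours of $p_j$ arising from the path are $p_{j+1}$ (identifier $id'$, father $\neq p_j$) and, when $j\ge1$, the old father $p_{j-1}$ (identifier $id$, father $p_{j-2}\neq p_j$); since $id'<id$, the minimum is attained at $p_{j+1}$, so $p_{j+1}=min_{p_j}$ and the guard is met with new father $p_{j+1}$. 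The absence of abnormal sequences is exactly what rules out some spurious neighbour carrying an identifier below $id'$ that would otherwise divert the switch, so this is where the earlier structural lemmas must be invoked carefully.

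Finally, for finiteness I would appeal to the fairness of the scheduler: each of the at most $i$ enabled switches is executed after finitely many steps, and once $p_j$ has switched it establishes the inductive hypothesis that enables the switch of $p_{j-1}$. Hence after at most $i$ successive switches every node $p_j$ with $0\le j\le i$ is in state $(B,id',p_{j+1})$, and the lemma follows.
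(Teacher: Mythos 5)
Your proposal is correct and follows essentially the same route as the paper's proof: a backward cascade from $p_i$ to $p_0$ in which each $p_j$ ($j\ge 1$) switches via $R4$ and the extremity $p_0$ resigns via $R5$. Your extra care in checking the $min_p$ guard under the assumption that abnormal sequences have been eliminated is a welcome refinement of the same argument rather than a different approach.
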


\begin{proof}
Note that when $p_i$ changes its state to $S_{p_{i}}=(B,id',q)$ with $q \ne p_{i-1}$ and $id'<id$, $p_{i-1}$ becomes neighbor of a node ($p_i$) that is in the broadcast phase with a smallest $id$, $p_{i-1}$ changes its state to $S_{p_{i-1}}=(B,id', p_i)$ by executing $R4$. Note that now $R15$ now is enabled on $p_{i-1}$ and so on, thus we are sure that $\forall$ $1 \leq j \leq i$, $p_j$ will updates their state to $S_{p_{j}}=(B,id', p_{j+1})$. Note that when $p_1$ updates its state, $R5$ becomes enabled on $p_0$, when $R5$ is executed, $p_0=(B,id,p_1)$. Thus the lemma holds. 
\end{proof}

Let us refer by $Initial-PIF$ waves, the set of $PIF$ waves that were already in execution in the initial configuration (they were not initiated after the faults). Recall that the starting configuration can be any arbitrary configuration. Thus, some $PIF$ waves can be separated by nodes that are in the feedback phase: $PIF_{id}$ and $PIF_{id'}$ are said to be separated by nodes in the feedback phase if there exists a sequence of nodes $p_0, p_1,..., p_k$ such that $S_{p_{0}}=(B,id,z)$ with $z\ne p_1$ and $S_{p_{k}}=(B,id',z')$ with $z'\ne p_{k-1}$ and ($S_{p_{1}}=(F,id,p_0)$ $\vee$ $S_{p_{-1}}=(F,id',p_k)$). Let refer by $PIF$ friendly set $S_i$, the set of nodes that are part of $PIF$ waves that are not separated by nodes in the feedback phase. Note that $1\leq i \leq k$.

In the following, we say that there is a Partial-Final sequence in a $PIF$ friendly set $S_i$, if there exists a node $p$ such that $S_p=(B,id,NULL)$ and $\forall$ $q \in S_i/\{p\}$, $q \in SubTree(p)$ and $S_q=(F,-,-)$ (refer to Figure \ref{PartialFig}).

\begin{figure}
   \centering
   \includegraphics[width=5cm]{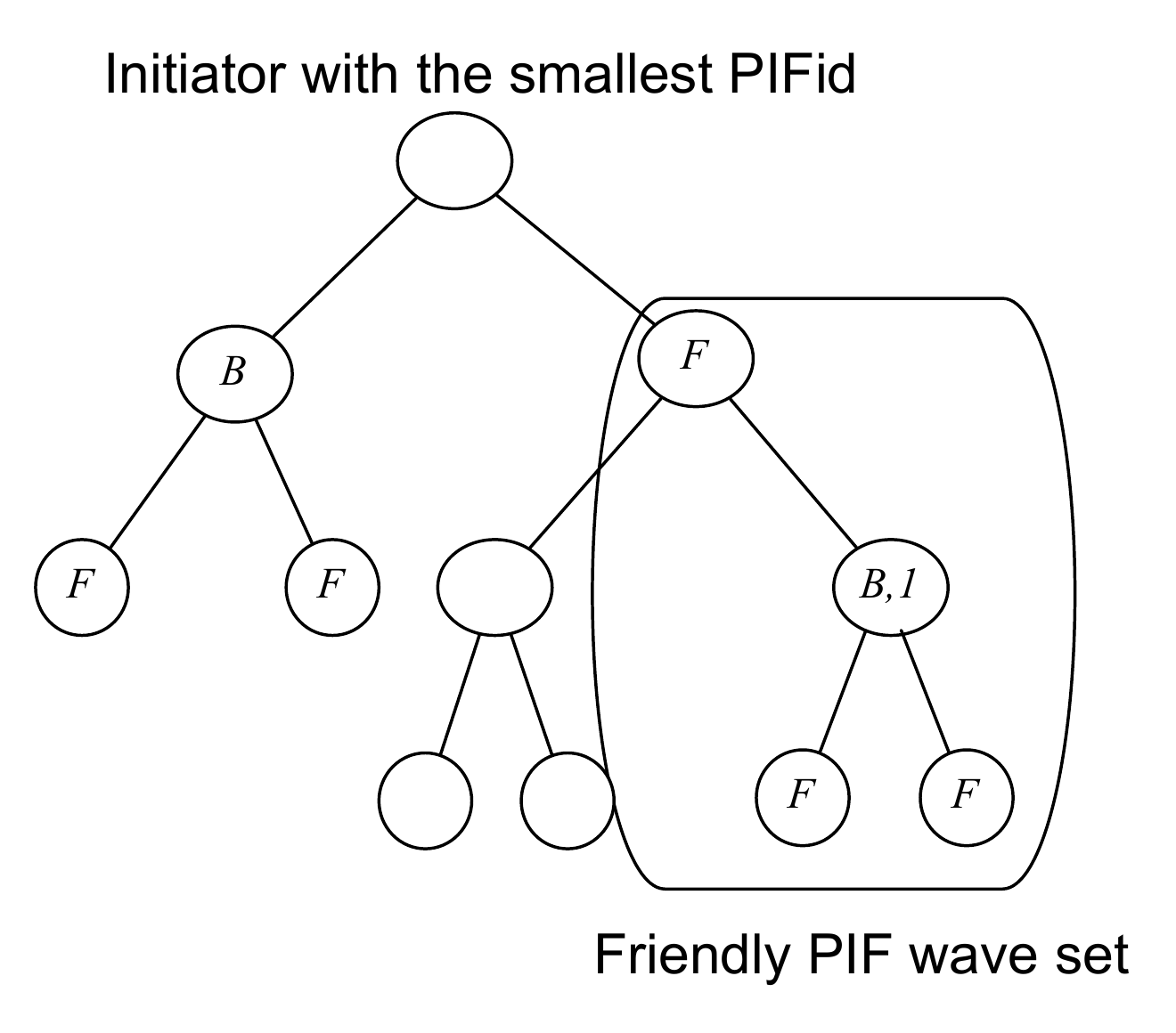}
   \caption{Partial-Final sequence in a $PIF$ Friendly Set}\label{PartialFig}
\end{figure} 

We can state the follow lemma:
 
 \begin{lem}\label{Partial}
In every $PIF$ friendly set, Partial-Final sequence is reached in finite
 \end{lem}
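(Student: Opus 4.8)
The plan is to argue in two phases: a \emph{merging} phase, in which every broadcast wave inside the friendly set coalesces into a single tree rooted at the initiator carrying the smallest identifier, followed by a \emph{feedback} phase, in which the acknowledgement climbs back up this tree until only that root remains in the broadcast phase, which is exactly a Partial-Final sequence. Throughout I work in a configuration containing no abnormal sequence (type $A$, type $B$, or dynamic), which exists by the preceding lemmas.

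For the merging phase, fix a $PIF$ friendly set $S_i$, let $id_{min}$ be the smallest $PIF$ identifier occurring in $S_i$, and let $p$ be its initiator, so $S_p=(B,id_{min},NULL)$. Since the nodes of $S_i$ are connected through broadcast nodes and are not separated by any feedback node, every node of $S_i$ with an identifier strictly larger than $id_{min}$ has, somewhere along the chain joining it to $p$, a broadcast neighbor with a strictly smaller identifier; hence one of the father-switch rules $R4$ or $R5$ is enabled. When such a node $p_i$ switches to the smaller identifier, Lemma~\ref{PathChange} forces every node between $p_i$ and the old extremity to re-root toward $id_{min}$ as well. As the number of distinct identifiers in $S_i$ is finite and each switch strictly lowers the identifier of at least one node toward the fixed target $id_{min}$, the process is monotone and terminates. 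At termination every node of $S_i$ is in state $(B,id_{min},\cdot)$ and the $id_f$ pointers describe a single tree (a FullPath) rooted at $p$.

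For the feedback phase I run the standard \emph{classic}-PIF convergence on this tree. Each overlay leaf of $S_i$ (a node with $|N_p|=1$) is enabled for $R6$ or $R7$ and moves to $FC$ or $FI$; a broadcast node all of whose children already lie in the feedback phase is then enabled for $R8$ or $R9$ and itself moves to the feedback phase. Taking as potential the largest depth of a broadcast node other than $p$, each round of feedback strictly decreases it, so after finitely many steps it vanishes. At that point every node of $S_i\setminus\{p\}$ is in the feedback phase and belongs to $SubTree(p)$ while $p$ is still in state $(B,id_{min},NULL)$, which is precisely a Partial-Final sequence.

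The main obstacle is the merging phase: one must check that the concurrent, asynchronous re-rooting triggered by $R4$ and $R5$ neither recreates an abnormal sequence nor loops forever. The earlier lemmas supply the first point by excluding spurious broadcast configurations, Lemma~\ref{PathChange} controls how a single local switch realigns an entire path, and the finiteness of the identifier set together with the fixed target $id_{min}$ yields the monotone termination argument; once the tree is fixed, the feedback phase is routine.
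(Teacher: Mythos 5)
Your proof follows essentially the same route as the paper's: merge every wave in the friendly set onto the initiator carrying the smallest identifier via the father-switch rules and Lemma~\ref{PathChange}, then let the standard feedback climb back to that initiator; your explicit monotone termination argument (identifiers only decrease, and the identifier set is finite) is in fact a cleaner packaging of the paper's informal discussion of the ``special case'' in which nodes temporarily adopt non-minimal identifiers. The one step you gloss over, which the paper treats as its first case, is that two waves in the same friendly set may still be separated by \emph{clean} nodes, so that no father-switch is enabled until the broadcasts first grow via $R3$ (or meet at a common clean neighbor) and the differently-identified regions become adjacent---a minor omission that does not affect the overall argument.
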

 
\begin{proof}
Let consider a single $PIF$ friendly set $S_i$. It is clear that if in $S_i$ there is only one $PIF$ wave that is
executed, then the lemma holds since the $PIF$ wave in $S_i$ behaves as the $PIF$ wave described in \cite{DBLP:journals/dc/BuiDPV07}. In the following we suppose that there are at least two $PIF$ waves that are executed in each $PIF$ friendly set. 

Observe that the behavior of $PIF$ waves when they are alone (when they do not meet any other $PIF$ wave) is similar
to the well now $PIF$ schema in \cite{DBLP:journals/dc/BuiDPV07}, thus, we will not discuss their progression in this
case. Let us consider the $PIF$ wave that has the smallest identifier $id$ in the set $S_i$ (let refer to its
initiator by $InitTarget$). The cases bellow are then possible:

\begin{enumerate}
\item \label{caseBCB} There exists a node $p$ such as $S_p=(C,NULL,NULL)$ $\wedge$ $\exists$ $q,q' \in N_p$, $S_q=(B,id,z)$ $\wedge$ $S_{q'}=(B,id',z')$. Note that $id<id'$. In this case $p$ sets its state to the broadcast phase and chooses $q$ as its father $S_p=(B,id,q)$. Note that for the node $q'$ we retrieve case \ref{caseBB}.
\item \label{caseBB} There exists a node $p$ such as $S_p=(B,id',q)$ $\wedge$ $\exists$ $q' \in N_p$, $S_{q'}=(B,id,z)$ $\wedge$ $z\ne p$. Note that  $id<id'$. In this case we are sure that $S_{q}=(B,id'',z)$ with $id'' \leq id'$ and $z\ne p$ (recall that there in no more abnormal sequences). Thus, $p$ changes its state to $S_p=(B,id,q')$. Note that $p$ was part of path $P_{id'}$. From Lemma \ref{PathChange}, all the nodes on the path will updates their state as well. Thus the initiator of the $PIF$ wave of identifier $id'$ will be able to know that there is another $PIF$ wave with a smaller $id$ that is being executed (let refer to such a node by $Init_{id'}$). $Init_{id'}$ updates it state to be part of the $PIF$ wave of identifier $id$ (refer to Rule $R5$). Note that only the node that are on the path of both $p$ and $Init_{id'}$ updates their state. The other nodes (that were part of $PIF_{id'}$) don't have to update their state since when the nodes of the path update their state, they are already part of $SubTree(p)$. \\
Observe that since some of the nodes that are part of $SubTree(p)$ do not change their state, Some $PIF$ waves can be seen as $PIF$ waves with smaller $id$. Refer to Figure \ref{SmallerPIF}. (From the figure (case ($c$)) we can observe that $p'$ is not aware of the presence of the $PIF$ wave of $id=0$ since it did not change its state). Let $p_0$ be the node that is neighbor to the initiator of the $PIF$ wave $PIF1$ that has the smallest $id$. Note that $S_{p_{0}}=(B,id,z)$ ($p$ in Figure \ref{SmallerPIF}) and let $p_k$ be the node that is neighbor to the initiator of the other $PIF$ (the one that can be considered as the $PIF$ wave with the smallest $id$, the $PIF$ wave with $id=1$ in Figure  \ref{SmallerPIF}). Observe that $S_{p_{k}}=(B,id',z')$. Let $P= p_1,p_2,p_3, ... p_{k-1}$ be the sequence of nodes between $p_0$ and $p_k$ such as $S_{p_{i}}=(B,id'',p_{i-1})$ and $id''>id$. Note that on $p_{k-1}$, $R4$ is enabled. When the rule is executed, $p_{k-1}$ updates its state to $S_{p_{k-1}}=(B,id',p_k)$. Note that on $p_{k-2}$, $R4$ becomes enabled thus, $p_{k-2}$  will have the same behavior, it updates its state to $S_{p_{k-2}}=(B,id',p_{k-2})$ and so on. Hence, all the nodes on $P$ except $p_1$ will updates their state and set it at $(B,id',p_{i+1})$. Note that $p_1$ is able to detect the presence of the two $PIF$ waves since it has two neighboring nodes that are part of different $PIF$ waves, it is able to detect that $p_2$ think that the $PIF$ with the identifier $id'$ is the smallest one. Thus, $p_1$ updates its state by changing the identifier of the $PIF$ waves to set it at the smallest one. ($S_{p_{1}}=(B,id,z)$). By doing so, $R4$ becomes enabled on $p_{2}$. When the rule is executed, $p_2$ updates its state and so on. 
\end{enumerate}
Let $p_{init}$ be the node that initiates a $PIF$ wave that has the smallest $id$ within the $PIF$ friendly $S_i$. From the two cases above, we can deduce that all the nodes part of the same $S_i$ will be part of the $SubTree(p_{init})$. Note that when the feedback phase finished its execution, all the nodes in $S_i$ except $p_{init}$ will be in the feedback phase. Thus a Partial-Final configuration is reached and the lemma holds.
\end{proof}

\begin{figure}
 \begin{minipage}[b]{.46\linewidth}
  \begin{center}
  \epsfig{figure=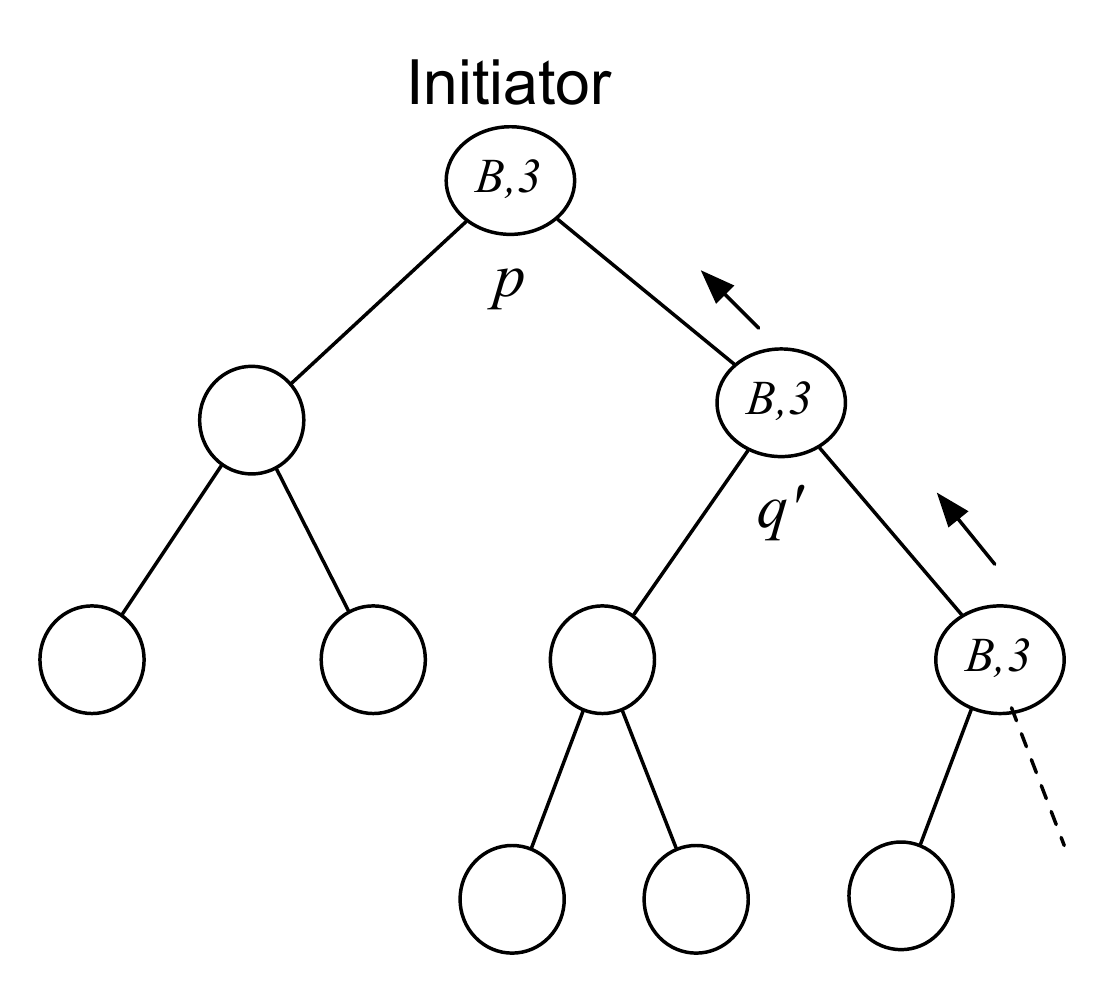,width=4.5cm}\\
  \textit{(a)} A PIF wave with id=3 is being executed.
  \end{center}
 \end{minipage} \hfill
 \begin{minipage}[b]{.46\linewidth}
 \begin{center}
 \epsfig{figure=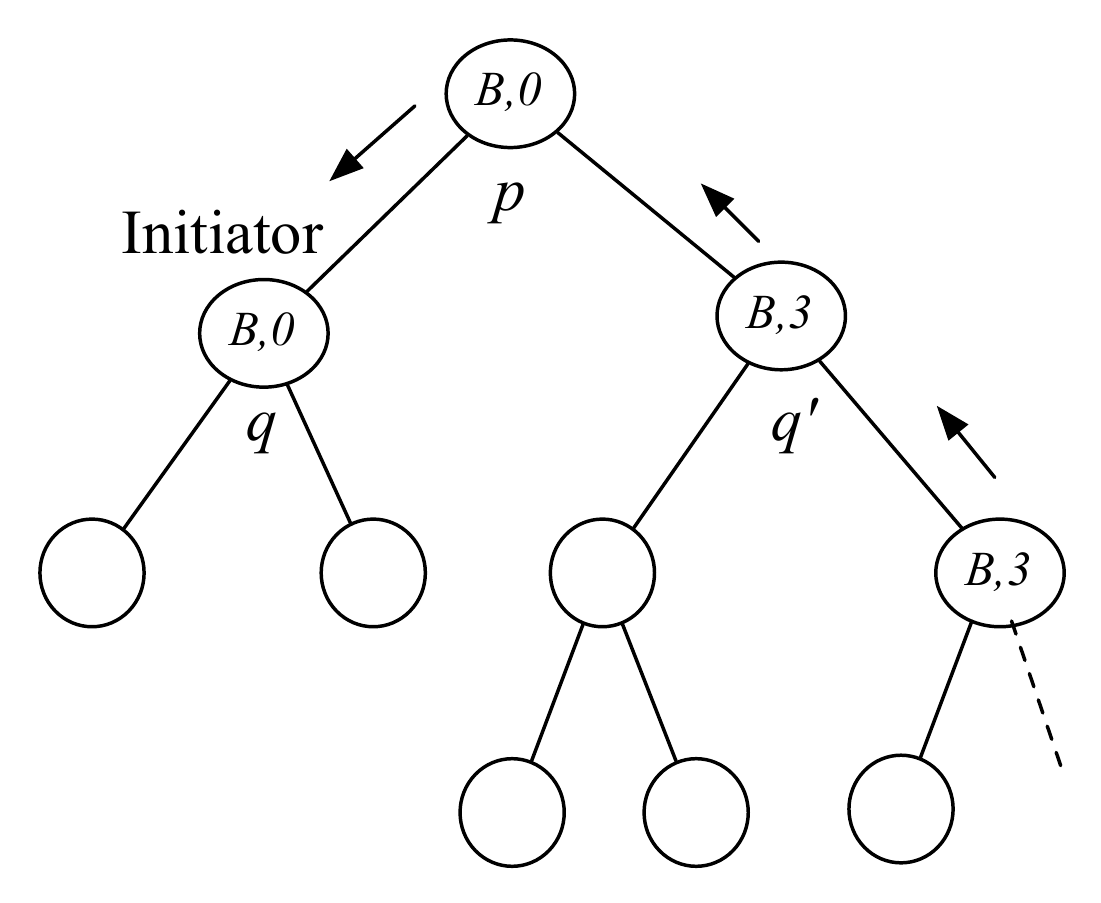,width=4.5cm}\\
 \textit{(b)} a $PIF$ wave with $id=0$ is initiated. p updates its state. 
  \end{center}
 \end{minipage}\hfill
 \begin{minipage}[b]{.46\linewidth}
\begin{center}  
  \epsfig{figure=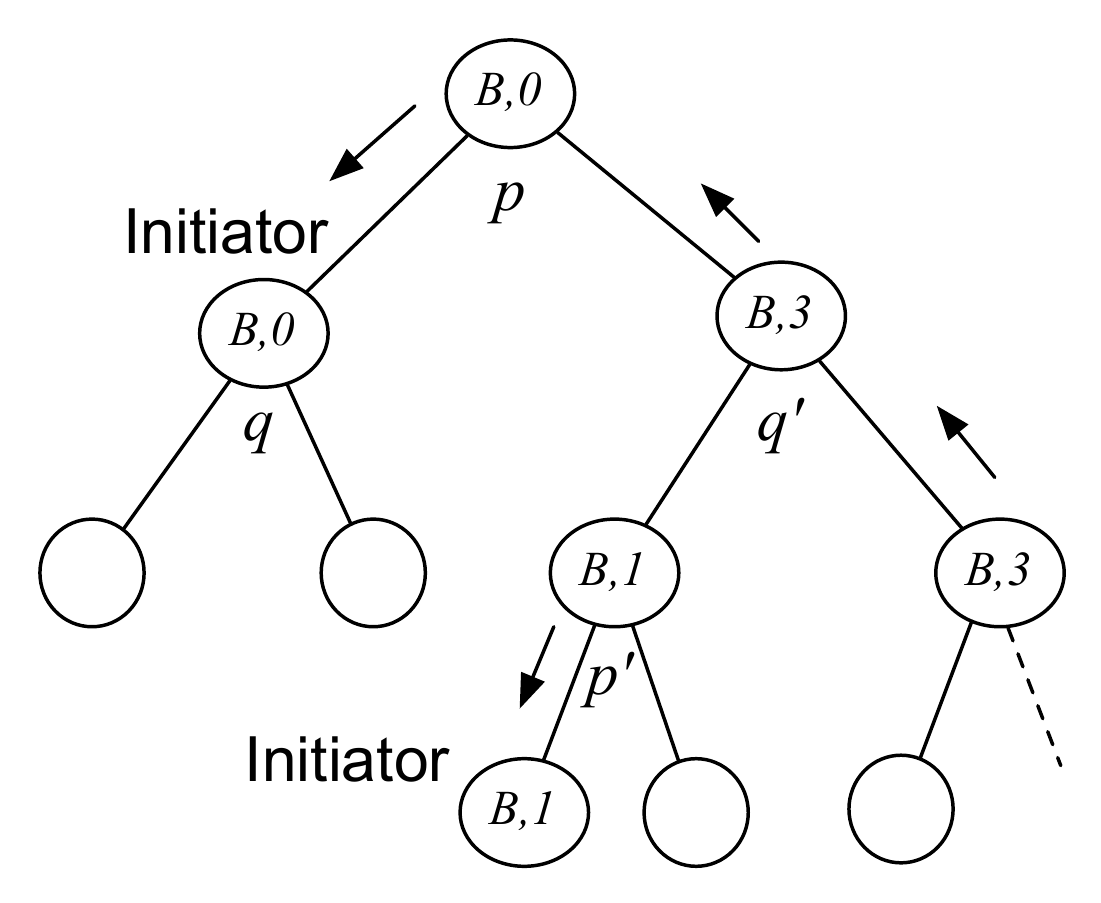,width=4.5cm}\\
 \textit{(c)} A $PIF$ wave with $id=1$ is initiated. $p'$ believe that it is the smallest one and updates its state.
  \end{center}
 \end{minipage}\hfill
  \begin{minipage}[b]{.46\linewidth}
 \begin{center}
 \epsfig{figure=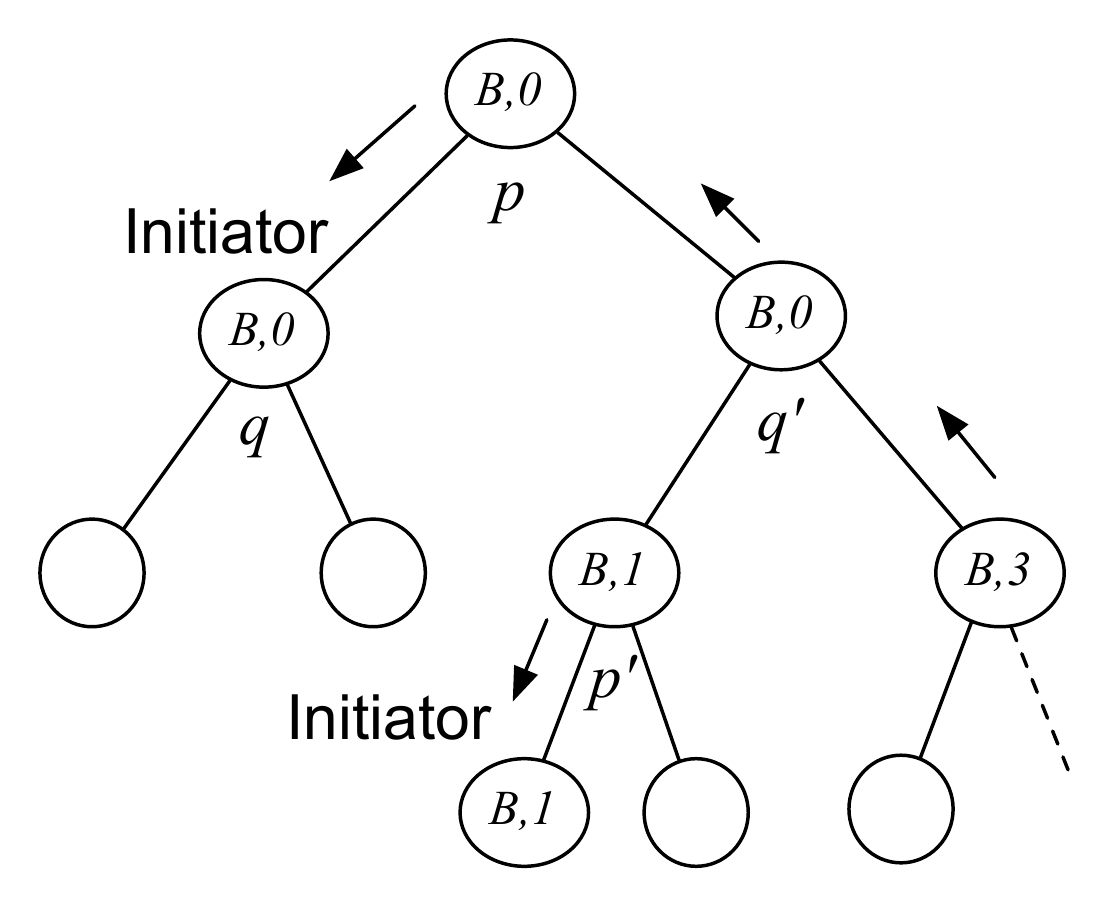,width=4.5cm}\\
  \textit{(d)} q' notices the situation and updates the $id_{PIF}$.
  \end{center}
 \end{minipage}\hfill
 \caption{Special Case.\label{SmallerPIF}}
\end{figure}

From the two cases above, we can deduce that after a finite time, a final-partial configuration is reached in a finite time and the lemma holds.

 \begin{lem}\label{InitPIF}
Every node belonging to a $PIF$ friendly set eventually clean its state.
 \end{lem}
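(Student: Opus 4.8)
The plan is to start from the Partial-Final configuration guaranteed by Lemma~\ref{Partial} and show that the cleaning phase sweeps the whole friendly set from its unique initiator down to the leaves. First I would recall that, by Lemma~\ref{Partial}, within each $PIF$ friendly set $S_i$ we eventually reach a configuration in which a single node $p_{init}$ satisfies $S_{p_{init}}=(B,id,NULL)$ with $id=(id_{peer},l_{p_{init}})$, while every other node $q\in S_i\setminus\{p_{init}\}$ belongs to $SubTree(p_{init})$ and satisfies $S_q=(F,-,-)$. In such a configuration every neighbor of $p_{init}$ in $S_i$ points back to it in the feedback phase, so the guard of $R10$ (when all these neighbors carry $FC$) or of $R11$ (when at least one carries $FI$) is satisfied at $p_{init}$. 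Executing the corresponding rule sets $S_{p_{init}}=(C,NULL,NULL)$, so the initiator cleans its state in a single step.

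Next I would argue that the cleaning propagates down the tree by induction on the distance from $p_{init}$ inside $SubTree(p_{init})$. The key observation is $R12$: once a node $r$ has cleaned, each child $q$ of $r$ (a node whose father pointer is $r$) still satisfies $S_q=(FI\vee FC,id,r)$ with $r$ now clean, so the guard of $R12$ instantiated at $q$, namely $S_q=(FI\vee FC,id,r)\wedge(S_r=(C,-,-)\vee r=NULL)$, becomes enabled; executing it yields $S_q=(C,NULL,NULL)$. Taking the base case $r=p_{init}$, which has just cleaned via $R10$ or $R11$, the children of $p_{init}$ become enabled for $R12$ and clean, then their children, and so on. Since $SubTree(p_{init})$ is a finite tree of bounded height, after a number of steps bounded by that height every node of $S_i$ has executed $R12$ (or $R10/R11$ for the root) and reached state $(C,NULL,NULL)$.

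The main obstacle I anticipate is making this downward induction airtight: I must ensure that no correction or feedback rule can re-promote an already-cleaned node or stall the cleaning wave, i.e.\ that a node whose father has just become clean has \emph{only} $R12$ enabled, and that no rule reintroduces a broadcast or feedback state inside $SubTree(p_{init})$ during the sweep. This is where I would invoke the earlier lemmas guaranteeing a suffix free of abnormal sequences of type $A$, of type $B$, and of dynamic type: under that invariant the friendly set keeps its clean tree structure rooted at $p_{init}$ throughout the cleaning, so the wave cannot be reversed and cannot deadlock. Combining the single root-cleaning step with the finite downward propagation then yields that every node belonging to the friendly set eventually cleans its state, which is exactly the claim.
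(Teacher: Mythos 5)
Your proposal is correct and follows essentially the same route as the paper's own proof: the unique initiator of the Partial-Final configuration cleans via $R10$/$R11$, and the cleaning then propagates outward level by level through $R12$ over the finite subtree. Your extra care about ruling out re-promotion of cleaned nodes (via the earlier abnormal-sequence lemmas) is a welcome tightening of a step the paper leaves implicit, but it does not change the argument's structure.
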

 
\begin{proof}
Note that when the configuration of type partial-Final is reached in each set, there is only one node $p$ that is enabled. Observe that $S_p=(B,id,NULL)$ (refer to the Rules $R10$ and $R11$). When $p$ executes $R10$, it updates its state to $S_p=(C,NULL,NULL)$. Let $N_1$ be the set of nodes that are neighbor of $p$. Recursively, let $N_i$ be the set of nodes that are neighbor of one node that is part of $N_{i-1}$. Note that $R12$ becomes enabled on all the nodes part of $S1$, when the rule is executed, each node clean its state. In the same manner, $R12$ becomes enabled on the nodes part of $S_2$, and so on. Thus we are sure all the nodes part of each $PIF$ friendly set will eventually clean their state and the lemma holds.

\end{proof}

From Lemma \ref{InitPIF}, all the nodes part of $Initial-PIF$ waves clean their state in a finite time. Thus we are sure that a configuration without any $Initial-PIF$ wave is reached in a finite time.

Let us now consider the system at that time (without $Initial-PIF$ waves). In the following we extend the notion of Partial-Final configuration as follow: we say that a configuration is of type Final-Configuration if there exists in the system a single $PIF$ friendly set such that there exists a node $p$ that verifies the following condition: $S_p=(B,id,NULL)$ and $\forall$ $q \in S_i/\{p\}$, $q \in SubTree(p)$ and $S_q=(F,-,-)$

The following lemma follows:

\begin{lem}\label{FINAL}
If there are many $PIF$ waves that are executed then after a finite time, Final Configuration is reached.
\end{lem}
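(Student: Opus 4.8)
The plan is to exploit the reduction, already prepared by the preceding lemmas, to a configuration that is free of every kind of abnormal sequence (types $A$, $B$, and dynamic) and free of $Initial$-$PIF$ waves (Lemma~\ref{InitPIF} together with the remark following it). In such a configuration each of the concurrently running waves is \emph{fresh}: it is a well-formed tree rooted at a genuine initiator whose state is $(B,id,NULL)$ with $id=(id_{peer},l_p)$, and distinct waves carry distinct identifiers. I would let $id^{*}$ be the smallest identifier among these waves and fix the corresponding initiator $p_{init}$ as the target of the whole argument.

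First I would show that the broadcast of the wave labelled $id^{*}$ eventually covers the entire (finite and connected) overlay while absorbing every competing wave. When its front reaches a clean node, that node joins via $R3$; when the front of $id^{*}$ meets a node already broadcasting under a larger identifier, that node adopts $id^{*}$ by switching its father ($R4$) or by correcting its identifier ($R20$), and by Lemma~\ref{PathChange} this update travels back along the whole path to the larger-identifier initiator, which then resigns through $R5$ and subscribes to $p_{init}$. Since every such step strictly lowers some node's identifier toward the fixed lower bound $id^{*}$ and the overlay is finite, this relabelling is monotone and therefore terminates; at termination every non-clean node lies in $SubTree(p_{init})$, so the system contains a single $PIF$ friendly set.

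It then remains to run the feedback to completion inside this unique friendly set. Once the $id^{*}$ broadcast has reached the leaves, rules $R6$/$R7$ start the feedback and $R8$/$R9$ propagate it toward $p_{init}$; this is precisely the situation treated in the proof of Lemma~\ref{Partial}, so a Partial-Final sequence is obtained. Because that friendly set now spans the whole overlay and $p_{init}$ is its only initiator, the Partial-Final sequence \emph{is} a Final Configuration: $S_{p_{init}}=(B,id^{*},NULL)$, while every other node sits in $SubTree(p_{init})$ in a feedback state $(F,-,-)$, which is exactly what is required.

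The main obstacle will be making the second paragraph rigorous, \ie proving that the interaction of the collaboration rules genuinely converges to one wave instead of splitting the overlay into several permanent friendly sets separated by feedback nodes. The delicate point is twofold: to rule out that $R4$, $R5$, and $R20$ keep triggering one another indefinitely, and to guarantee that two broadcast fronts do not each independently enter the feedback phase and thereby become permanently separated before they merge. I would settle the first by the monotonicity already invoked above (strict decrease of identifiers, bounded below by $id^{*}$), and the second by observing that a node on the path joining two initiators cannot feedback while the neighbor leading toward the other wave is still broadcasting, together with Lemma~\ref{TrapConf}, which forbids fresh waves from ever producing the same-identifier trap (feedback-separated) configurations that would split the system into distinct friendly sets.
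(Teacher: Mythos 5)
Your proposal is correct and follows essentially the same route as the paper: establish that, once the initial (corrupted) waves are gone, freshly initiated waves cannot be separated by feedback nodes (since feedback only starts at leaves and propagates rootward), hence all nodes form a single $PIF$ friendly set, and then invoke Lemma~\ref{Partial} together with the observation that a Partial-Final sequence in a friendly set spanning the whole overlay is exactly a Final Configuration. The extra detail you give on merging via $R3$, $R4$, $R5$ and the monotone decrease of identifiers toward $id^{*}$ is really the content of Lemma~\ref{Partial}, which both you and the paper delegate to.
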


\begin{proof}
Note that since all the $PIF$ waves that are on the system were initiated by nodes, there is exactly one $PIF$ friendly set \ie there are no $PIF$ waves that are separated by nodes in the feedback phase (since the Feedback phase is initiated from the leaves of the tree overlay and a node $p$ is allowed to change its to the feedback phase only if all its neighboring nodes except his father are already in the feedback phase). Observe that all the nodes of the system are part of the $PIF$ friendly set. Observe also that the partial-Final configuration in this case is exactly the same as the Final configuration. Thus, we can deduce from Lemma  \ref{Partial} that Final configuration is reached in a finite time. and the lemma holds. 
\end{proof}

\begin{lem}\label{FinalClean}
Starting from a final configuration, all the nodes of the system eventually clean their state.
\end{lem}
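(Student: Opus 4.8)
The plan is to reproduce, in the global setting, the cleaning cascade already used in the proof of Lemma~\ref{InitPIF}. By definition of a final configuration there is a single $PIF$ friendly set spanning the whole system, together with a unique node $p$ such that $S_p=(B,id,NULL)$; since no abnormal sequence of type $A1$ survives, we in fact have $id=(id_{peer},l_p)$. Every other node $q$ satisfies $S_q=(F,-,-)$ and lies in $SubTree(p)$, pointing through its $id_f$ field to its father along the $FullPath$ that ends at $p$. The first step is to observe that $p$ is the only enabled node: each feedback node still points to a father in state $B$ or $F$ (never $(C,NULL,NULL)$), so the guard of $R12$ holds nowhere, whereas $p$ enables $R10$ when all its neighbours are in $FC$ and $R11$ when at least one of them is in $FI$.

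Next I would let $p$ fire $R10$ or $R11$. In either case $p$ sets $S_p=(C,NULL,NULL)$ and notifies its subscribers stored in $ListToContact$. The key structural remark is that after this step no node remains in state $B$: before the step $p$ was the only $B$ node, $R10/R11$ turned it into $C$, and all remaining nodes are in $F$. Consequently the broadcast-propagation and father-switch rules $R3$, $R4$, $R5$ are all disabled for the rest of the phase, since each of them requires a neighbour in state $B$; the only rule that can become enabled from now on is the cleaning-propagation rule $R12$. (A fresh $R1$ initiation would merely start a new, independent CoPIF wave and does not extend the present one.)

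Then I would propagate the cleaning top-down by a layering induction. Put $N_0=\{p\}$ and, for $i\geq 1$, let $N_i$ be the nodes of $SubTree(p)$ whose father lies in $N_{i-1}$; because the overlay is a finite tree, only finitely many layers are non-empty. The inductive claim is that once every node of $N_{i-1}$ has reached $(C,NULL,NULL)$, each $q\in N_i$ becomes enabled for $R12$ — indeed $S_q=(FI\vee FC,-,q')$ with $q'\in N_{i-1}$ now clean, which is exactly the guard of $R12$ — and upon execution $q$ sets $S_q=(C,NULL,NULL)$. Since no $B$ node exists to trigger $R3$, a cleaned node cannot be re-enabled, so the cleaning is monotone and each layer is cleaned exactly once. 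Under the fairness of the daemon every enabled action is eventually taken, hence after a number of rounds bounded by the depth of $SubTree(p)$ all nodes reach $(C,NULL,NULL)$ and the lemma holds.

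The main obstacle I anticipate is not the propagation itself but justifying that the cascade cannot be perturbed: one must argue that the disappearance of the last $B$ node forbids any father switch or broadcast extension ($R3$, $R4$, $R5$) and that no feedback node cleans before its father does, so that the process is a strictly top-down, finite cascade rather than an interleaving that could re-create $B$ nodes inside the tree. This rests essentially on the earlier results that the configuration contains no abnormal sequences of type $A$ or $B$ and no dynamic abnormal sequence, which guarantee that $SubTree(p)$ is genuinely a single tree rooted at $p$ with no spurious broadcast roots elsewhere.
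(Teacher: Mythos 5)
Your proposal is correct and takes essentially the same route as the paper: the paper's own proof simply notes that a Final configuration is a Partial-Final configuration with a single $PIF$ friendly set and invokes Lemma~\ref{InitPIF}, whose layer-by-layer $R10/R11$-then-$R12$ cleaning cascade you reproduce explicitly. Your additional care in arguing that no $B$-state rules ($R3$, $R4$, $R5$) can re-enable during the cascade is a welcome refinement but not a different approach.
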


\begin{proof}
Can be deduce directly from Lemma \ref{InitPIF} (since Partial-Final configuration is the same as Final configuration when there is a single $PIF$ friendly set in the system). Thus, the lemma holds.
\end{proof}

\begin{theorem}\label{generation}
Every node is infinitely often able to initiate a $PIF$ wave
\end{theorem}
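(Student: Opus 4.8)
The plan is to derive this liveness property by feeding the convergence lemmas into the progress lemmas and then lifting the resulting ``eventually'' statements to ``infinitely often'' by a recurrence argument. The first observation I would record is that a node $p$ can initiate only through rule $R1$, whose guard is $Request_{PIF} \wedge S_p=(C,-,-) \wedge \forall q\in N_p,\ S_q\ne(-,-,p)$. Hence \emph{being able to initiate} means being in a clean state with no neighbor designating $p$ as its father. The target is therefore the all-clean configuration in which $S_r=(C,NULL,NULL)$ for every node $r$: there the father field of every node is $NULL$, so the third conjunct of $R1$ holds for every $p$ simultaneously and every node is ready to initiate.

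First I would reach such a configuration once. By the lemmas establishing suffixes free of the abnormal sequences of types $A1$--$A4$ and of type $B$ (together with Lemmas~\ref{AbnSeq} and~\ref{DynAbn}, which forbid the dynamic creation of new ones), every execution has a suffix that is legitimate and stays so. Restricting attention to this suffix, Lemma~\ref{InitPIF} disposes of the $Initial$-$PIF$ waves, after which any concurrently running waves fall, by Lemma~\ref{FINAL}, into a single $PIF$ friendly set that reaches a Final configuration in finite time; Lemma~\ref{FinalClean} then drives every node to a clean state. This produces the all-clean configuration and shows that every node is \emph{eventually} able to initiate.

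Then I would upgrade ``eventually'' to ``infinitely often''. The key point is that convergence and cleaning are not one-shot: from the all-clean configuration the system is legitimate, and any future burst of requests triggers waves that again merge (Lemma~\ref{Partial}), terminate in a Final configuration (Lemma~\ref{FINAL}), and clean up through $R10/R11$ followed by the cleaning propagation $R12$ (Lemma~\ref{FinalClean}). Each completed $CoPIF$ wave thus returns the system to the all-clean configuration, so that configuration recurs infinitely along the execution, and with it the enabledness of $R1$ at every node, which yields the claim.

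The hard part will be the recurrence step, specifically ruling out a livelock in which cleaning never fully completes: one must exclude interleavings where fresh broadcasts keep re-entering nodes ahead of the cleaning wave, so that some node perpetually has a neighbor pointing to it and the third conjunct of $R1$ is never re-enabled there. Closing this gap requires showing that the cleaning phase makes monotone progress from the extremity outward and cannot be indefinitely preempted, which is exactly where the well-foundedness of the $id_{PIF}$ ordering used in Lemma~\ref{PathChange} and the fairness of the scheduler must be invoked.
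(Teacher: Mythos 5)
Your overall route matches the paper's: use the convergence lemmas culminating in Lemma~\ref{FinalClean} to show that every node eventually reaches the clean state, and then argue that the enabledness of $R1$ recurs. However, you explicitly leave open what you call ``the hard part''---ruling out a livelock in which fresh broadcasts perpetually re-enter a node ahead of the cleaning wave so that $R1$ is never re-enabled there---and the tools you propose for closing it (well-foundedness of the $id_{PIF}$ order, scheduler fairness, monotone progress of cleaning) are not what does the job. The paper closes this gap with a single, purely local observation: the guard of $R3$ contains the conjunct $\neg Request_{PIF}$. Consequently a node $p$ with a pending request can never be absorbed into another wave's broadcast; once $R1$ becomes enabled at $p$ (that is, $p$ is clean, $Request_{PIF}$ holds, and no neighbor designates $p$ as its father), $p$ cannot leave the clean state except by executing $R1$ itself, and no neighbor can newly set its father field to $p$ while $p$ remains clean (a neighbor only points to $p$ upon receiving the broadcast from $p$ or feeding back to $p$, both of which require $p$ to be in the $B$ state). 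Hence $R1$, once enabled, remains enabled until executed, and the livelock you worry about simply cannot occur.

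A secondary, less serious difference: you aim at the recurrence of a \emph{global} all-clean configuration, which is stronger than needed. The paper's argument is local to each node $p$---Lemma~\ref{FinalClean} gives that $p$ itself eventually cleans its state after each wave, and the persistence of $R1$'s enabledness does the rest---so no claim about all nodes being simultaneously clean is required. You should restate your final step around the persistence property of $R1$'s guard rather than around the recurrence of the all-clean configuration; with that substitution your proof becomes essentially the paper's.
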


\begin{proof}
Directly follows from Lemma~\ref{FinalClean} and the fact that 
if Rule~$R1$ becomes enabled on $p$, then it remains enabled until $R1$ is executed---$R3$ cannot be
enabled while $Request_{PIF}=true$.

\end{proof}

We can now state the following result:

\begin{theorem}
Algorithm $1$ is a self-Stabilizing $CoPIF$ algorithm.
\end{theorem}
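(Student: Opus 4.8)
The plan is to establish that Algorithm~1 satisfies the three conditions of the $CoPIF$ wave definition, combining the liveness result already proved with the safety/correctness guarantees inherited from the underlying snap-stabilizing PIF of~\cite{DBLP:journals/dc/BuiDPV07}. First I would recall the overall structure of the argument: the preceding lemmas split into two phases. The earlier lemmas (Lemma~\ref{AbnSeq} through the dynamic-abnormal-sequence lemmas) show that every execution reaches a suffix free of all abnormal sequences---type $A$, type $B$, and dynamic---so that after finite time the configuration is ``clean'' in the sense that every node with a non-$C$ state lies on a well-formed \emph{FullPath}. The later lemmas (Lemmas~\ref{PathChange} through~\ref{FinalClean}) show that from such a clean configuration the collaborative merging converges: each $PIF$ friendly set reaches a Partial-Final configuration, all $Initial$-$PIF$ waves clean their state, and (Lemma~\ref{FINAL}) once only nodes-initiated waves remain, a single Final configuration is reached followed by a complete cleaning (Lemma~\ref{FinalClean}).

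Next I would verify the three bullets of the $CoPIF$ wave definition in turn. For the first condition (every node can initiate infinitely often), I simply invoke Theorem~\ref{generation}, which already gives exactly this: after the system cleans, $R1$ becomes and stays enabled for any node with $Request_{PIF}$ set, and crucially $R3$ cannot pre-empt it while $Request_{PIF}=true$, so the initiation is guaranteed. For the second condition (all nodes are visited), I would argue from the broadcast-propagation rules $R3$, $R6$, $R7$ together with Lemma~\ref{FINAL}: in a fault-free run there is a single $PIF$ friendly set comprising the whole tree, and the Final configuration has every node other than the unique extremity $p$ (with $S_p=(B,id,NULL)$) in a feedback state within $SubTree(p)$, which is precisely the statement that every node was reached by the broadcast and returned a feedback. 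For the third condition (exactly one node receives the acknowledgement), I would point to the uniqueness of the extremity in the Final configuration together with the initiator-resignation mechanism: rules $R4$, $R5$ and the \emph{Interested}/\emph{ListToContact} subscription protocol ensure the wave with the smallest $id_{PIF}$ absorbs the others, so exactly one initiator fires $R10$ or $R11$ and notifies all subscribers.

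Finally I would assemble these into the self-stabilization claim. The convergence half follows from the abnormal-sequence lemmas: from an arbitrary configuration the system reaches, in finite time, a configuration containing no abnormal sequences of any type, which is the entry point for the correctness (closure) argument. The closure half follows from Lemmas~\ref{Partial}--\ref{FinalClean} and Theorem~\ref{generation}: from any such legitimate configuration every computation that begins with an initiation realizes a genuine $CoPIF$ wave satisfying all three conditions, and the system returns to a clean all-$C$ configuration ready for the next wave. Concatenating the finite convergence prefix with the guaranteed legitimate behavior thereafter yields that Algorithm~1 is self-stabilizing for the $CoPIF$ specification.

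The main obstacle I anticipate is the third condition---pinning down \emph{exactly one} acknowledging node. The merging dynamics are subtle: during convergence several waves coexist and repeatedly switch fathers via $R4$, $R5$, and $R20$, and one must be sure this process is monotone in $id_{PIF}$ and terminates without livelock, so that a unique smallest-$id$ initiator survives. Lemma~\ref{TrapConf} (no two waves with the same $id$ on disjoint subtrees can arise after the start) and Lemma~\ref{PathChange} (father-switches propagate all the way to the old initiator) are the key levers here, but tying them cleanly to the uniqueness of the final extremity, and confirming that the out-of-band subscription messages correctly route the single answer back to every interested requester, is where the argument requires the most care.
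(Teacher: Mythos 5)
Your proposal is correct and follows essentially the same route as the paper: the paper's own proof is a short wrap-up that cites the initiation theorem (every node can initiate a wave infinitely often) for the first condition and the Final-configuration lemma for the second and third conditions, exactly the decomposition you use. Your version is simply more explicit about the convergence/closure split and about why the final extremity is unique, which the paper leaves implicit.
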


\begin{proof}
From Theorem \ref{generation}, each node is able to generate a $PIF$ waves in a finite time. From Lemma \ref{FINAL}, all the nodes of the system were visited by the $CoPIF$ wave. Thus, all of them acknowledge the receipt of the question (whether the tree overlay is in a correct state or not)  and give an answer to the latter. Finally, from Lemma \ref{FINAL} one node $p$ of the system receives the answer ($S_p=(B,id,NULL)$). Hence we can deduce that Algorithm $1$ is a self-Stabilizing $CoPIF$ algorithm and the theorem holds.
\end{proof}

\section{Evaluation}
\label{sec:Eval}

In order to evaluate qualitatively and quantitatively the efficiency of \copif, we drive a set of
experiments. As mentioned before, the \dlpt approach and its different
features have been validated through analysis and simulation \cite{TheseCedric}. 
The scalability and performance of its implementation, \textsc{Sbam} (Spades BAsed Middleware) has ever been improved in \cite{CIT2011}. 
Our goal is now to show the efficiency of the previously described QoS algorithm (Section~\ref{sec:Algorithm}). 
We will focus on the size of the tree, and number of PIF that collaborate simultaneously.
We will observe the behavior not only from the number of exchanged messages point of view but also in term of duration needed to performs \copifs.

\subsection{\textsc{Sbam}}
\label{sec:Sbam}

We use the term \emph{peer} to refer to a physical machine that is available on the network. 
In our case, a peer is an instantiated Java Virtual Machine connected to other
peers through the communication bus. We call \emph{nodes} the vertices of
the prefix tree. 

\textsc{Sbam} is the Java implementation of the \dlpt.  \textsc{Sbam}
proposes $2$-abstraction layers in order to support the distributed
data structure: the \emph{peer}-layer and the \emph{agent}-layer.  The
\emph{peer}-layer is the closest to the hardware layer.  It relies on
the Ibis Portability Layer (IPL) \cite{DrostCaCPE10} that enables the
P2P communication.  We instantiate one JVM per machine, also called
peer. JVM communicate all together as a P2P fashion using the IPL
communication bus.  The \emph{agent}-layer supports the data
structure. Each node of the \dlpt is instantiated as a \textsc{Sbam}
agent. Agents are uniformly distributed over peers and communicate
together in a transparent way using a proxy interface. Since we want
to guarantee truthfulness of information exchanged between
\textsc{Sbam}-agents, the implementation of an efficient mechanism
ensuring quality of large scale service discovery is quite
challenging.  In the state model described in the
section~\ref{ref:CompMod} a node has to read the state and the
variables of its neighbors.  In \textsc{Sbam}, the feature is
implemented using synchronous message exchange between agents.
Indeed, when a node has to read its neighbor states, it sends a
message to each and wait all responses.  Despite the fact that this
kind of implementation is expensive, especially on a large distributed
data structure, experiment (Section~\ref{sec:results}) shown that our
model implementation stays efficient, even on a huge prefix tree.

\begin{figure*}
  \centering
  \subfigure[\label{fig:evolution_nbMess} Evolution of number of exchanged messages.]{\mbox{\includegraphics[trim = 0mm 5mm 10mm 16mm, clip, width=.35\linewidth]{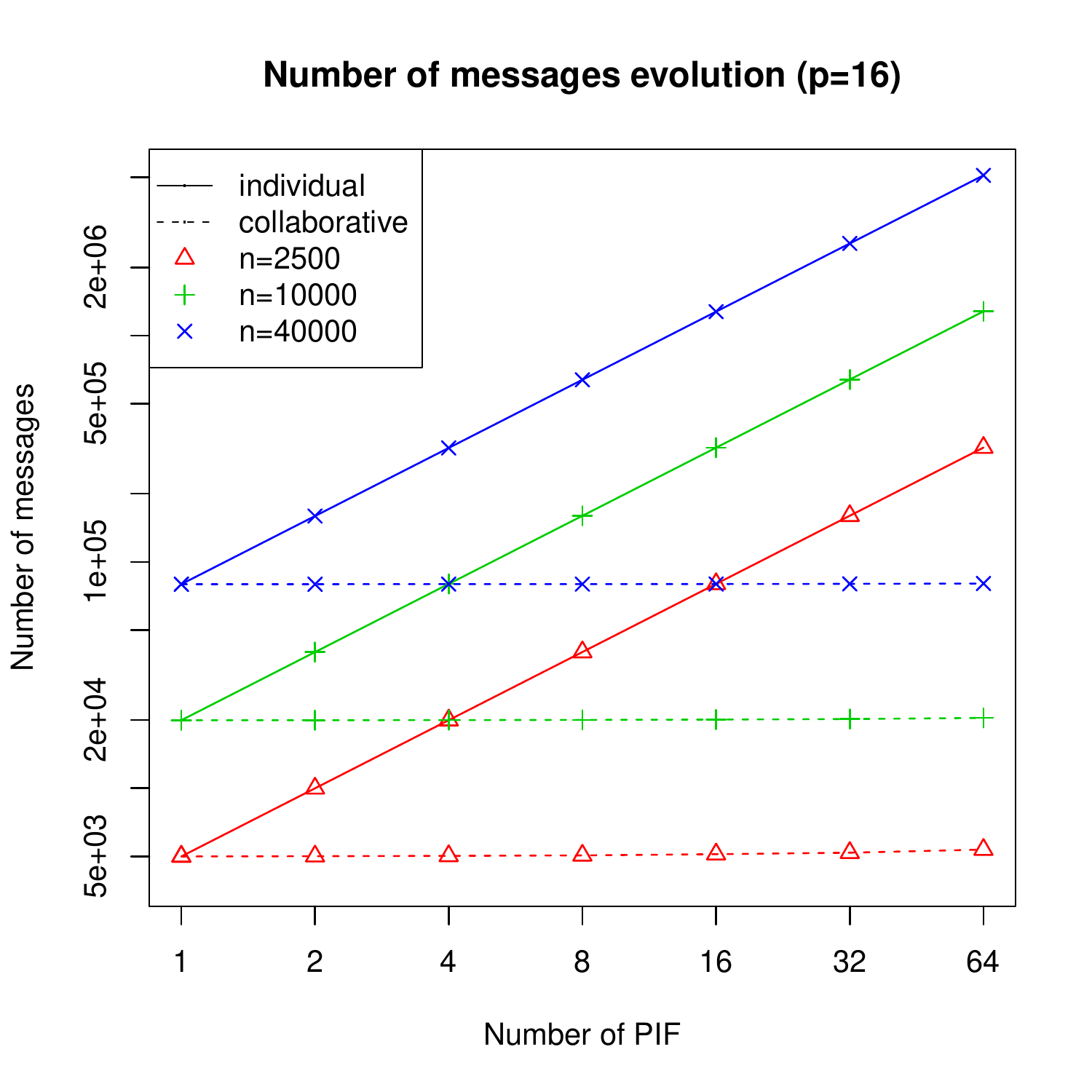}}}
  \subfigure[\label{fig:evolution_medDur} Evolution of duration.]{\mbox{\includegraphics[trim = 0mm 5mm 10mm 16mm, clip, width=.35\linewidth]{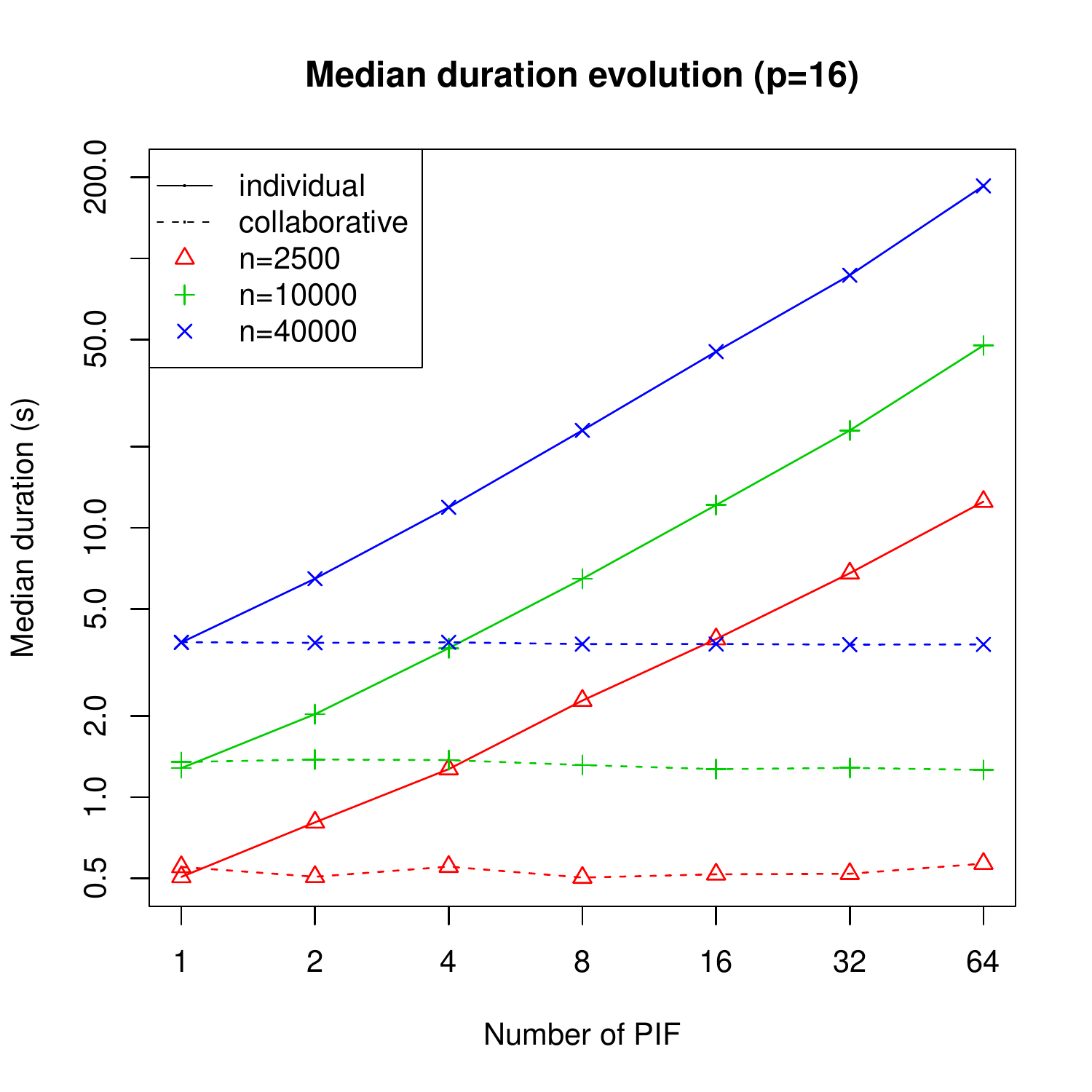}}}
  \caption{\label{fig:behaviour} \copif behavior}
\end{figure*}

\subsection{Experimental platform}
\label{sec:exp_pf}

Experiments were run on the Grid'5000 platform\footnote{\url{http://www.grid5000.fr/}} \cite{BCC+06},
 more precisely on a dedicated cluster 
\emph{HP Proliant DL165 G7} $17$ units, each of them equipped with 2
\emph{AMD Opteron 6164 HE} (1.7GHz) processors, each processor gathering 12
cores, thus offering a 264-cores platform for these experiments. Each unit
consists of 48 GB of memory. Units are connected through two~Gigabit Ethernet
cards. For each experiment, we deployed one peer per
unit.

\subsection{Scenario of experiments}

The initialization of an experiment works in three phases: ($i$) the communication bus is started on a computing unit
(Section~\ref{sec:exp_pf}), ($ii$) $16$ peers are launched and connected together through the communication bus, and
($iii$) a pilot is elected using the elect feature of the communication bus.

After the initialization, the pilot drives the experiment. It 
consists in two sequential steps. First, it sends  $n$ insertion requests to
the distributed tree structure.
An insertion request leads to the addition of a new entry in the \dlpt
tree (Section~\ref{sec:P2PServiceDiscoveryFramework}). The insertion requests are sent to a random node of the tree and routed
following the lexicographic pattern to the targeted node. Doing so, the node
sharing the greatest common prefix with the service name is reached. If the
targeted label does not exist, a new node is created on a
randomly chosen peer and linked to the existing tree. 

In the final step, the pilot selects a set of nodes to initiate {\it classic}-PIFs and \copifs (Section~\ref{sec:Algorithm}). 
In order to observe distributions, $10$ replications of this basic scenario are executed.

\subsection{Failures}
\label{sec:failures}

At this level of description we can distinguish two kind of failure: ($i$) failures in the \dlpt data structure, when
the prefix tree data structure is corrupted; ($ii$) failures in the \copif state variables, when state variable dedicated to \copif feature are corrupted. 

In our experiments, we consider that the \dlpt data structure is not corrupt. 
It corresponds to the worst case in term of \copif truthfulness check. 
Indeed, if the entire \dlpt data structure is correct, the \copif has to explore the entire \dlpt data structure to check it.

Next, remind that the objective of our experiment is to evaluate the efficiency of \copif compared to \textit{classic}-PIF. 
So, we consider that the \copif state variables are not corrupted and we measure the number of messages and the duration of \copif when the self-stabilization \copif has converged, it means after the clean part of the \copif state variables.

\subsection{Parameters and indicators}

\label{sec:ParamAndIndic}

The experiments conducted are influenced by two main parameters. 
First, $n$ denotes the number of inserted services in the tree. 
Second, $k$ refers to the number of PIF waves that are collaborating together. 

In these experiments, three trees were created with $n$ in the set $\{2500, 10000,$\\
 $40000\}$. The number of PIF that
collaborate ($k$) was taken from the set $\{1, 2, 4, 8,$\\
$ 16, 32, 64\}$. For each couple $(n,pif)$, $10$ replications are performed. Thus, $210$ experiments were conducted.

Strings used to label the nodes of the trees were
randomly generated with an alphabet of $2$ digits and a maximum length of $18$
(in a set of $524287$ key). 

For each experiment we observe two indicators: ($i$) the {\it total number of exchanged messages} observed and ($ii$) the
{\it time required} to perform $k$ {\it classic}-PIFs or \copifs over distributed data structure, {\it i.e.}, the time between the issue of the
PIFs and the receipt of the response on all nodes that initiate PIFs. 

For each indicator we obtain $21$ sets of $10$ values. 
We present evolution of the $median$-value of the $10$-replication according to $k$, the number of PIFs
(Figures~\ref{fig:evolution_nbMess} and~\ref{fig:evolution_medDur}).
The comparison of these indicators for {\it classic}-PIFs and \copifs provides us a \textbf{qualitative} overview of the gain obtain using \copif . 

In order to \textbf{quantitatively} evaluate the efficiency of the \copif strategy, for an indicator ($I$) and for a
given number of PIFs $k$, we compute the \textbf{efficiency criterion} with the following formula:
$$ E_{I,pif} = \frac{I_{\text{ind},pif}}{pif \times I_{\text{coll},pif}},$$ 
where $I_{\text{classic},pif}$ (resp. $I_{\text{CoPIF},pif}$) is the value of the indicator $I$ for a given $k$ and in an {\it classic}-PIF (resp. \copif) context. The evolution of this efficiency criterion are shown in Figures~\ref{fig:efficiency_nbMess}~and~\ref{fig:efficiency_medDur}.

\subsection{Results}
\label{sec:results}

Figure~\ref{fig:evolution_nbMess} (resp. \ref{fig:evolution_medDur}) presents the evolution of the number of messages
(resp. duration) needs to execute PIFs according to number of PIFs ($k$) that are simultaneously performed and the size of the data structure on which PIFs are performed. The $y$-axis represents the number of exchanged messages (resp. the duration).
On these figures, {\it classic}-PIFs and \copifs strategies are compared. 
On both curve, the $x$-axis represents the number of PIF ($k$) that are simultaneously performed.
On these figures, we present $2$ curve triplets. Solid (resp. dashed) curves triplet describes indicator in an {\it classic}-PIF (resp. \copif) context. 
For each triplet, \emph{red-triangle}-curve (resp. \emph{green-$+$}-curve and \emph{blue-$x$}-curve) describes behavior of indicator for $n=2500$ (resp. $n=10000$ and $n=40000$).

When the indicators explode in for {\it classic}-PIF strategy, they stay stable for \copif strategy. It \textbf{qualitatively} demonstrates gain of the \copif strategy over {\it classic}-PIF approach.
The introduction of the {\it efficiency criterion} in Section~\ref{sec:ParamAndIndic} allows us to measure quantitatively this gain and its behavior.

Figure~\ref{fig:efficiency_nbMess}~and~\ref{fig:efficiency_medDur} present the efficiency of \copif according to the number of exchanged messages and the duration. On these figures we want to observe the impact of the size of the data structure on the efficiency of \copif.
Figure~\ref{fig:efficiency_nbMess} reveals us that, in term of number exchanged messages, on small data structure, the collaborative mechanism is less efficient than on huge one. 
This result was expected because on small data structure the number of messages due to collision between collaborative PIFs (overhead) represents a more important part of the entire number of exchanged messages. So, the bigger the data structure, the more efficient \copif.

More interesting is the analyze of the Figure~\ref{fig:efficiency_medDur}. Indeed we can observe the same tendency in term of duration but the efficiency decrease faster with the number of PIFs that are simultaneously performed. It is explain by the fact that overhead messages introduced by \copif are particularly expensive  messages in term of duration. It \textbf{quantitatively} demonstrates gain of the \copif strategy over {\it classic}-PIF approach.

\begin{figure*}
  \centering
  \subfigure[\label{fig:efficiency_nbMess} Number of PIF according to the number of exchanged messages.]{\mbox{\includegraphics[trim = 0mm 5mm 10mm 16mm, clip, width=.35\linewidth]{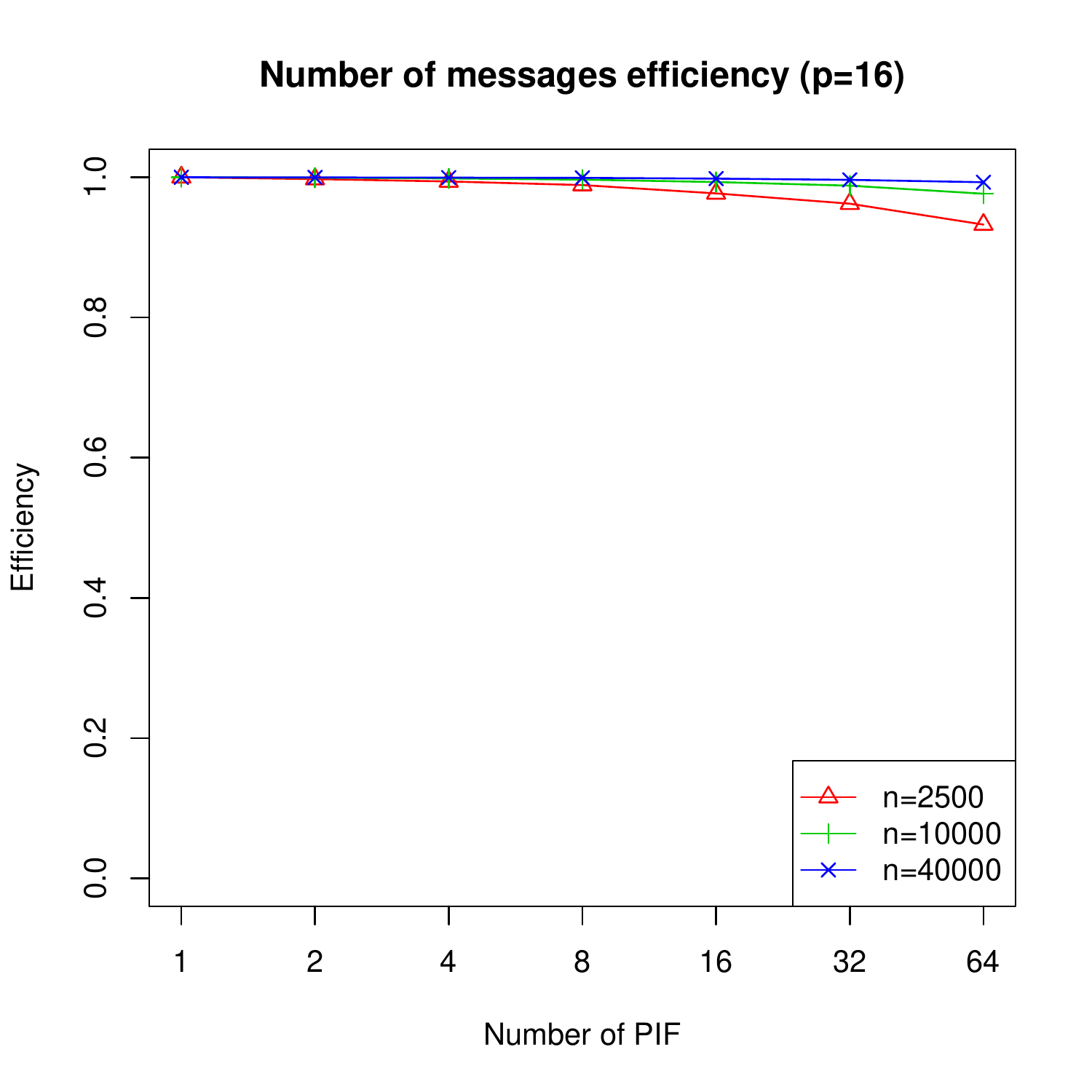}}}
  \subfigure[\label{fig:efficiency_medDur} Number of PIF according to the duration.]{\mbox{\includegraphics[trim = 0mm 5mm 10mm 16mm, clip, width=.35\linewidth]{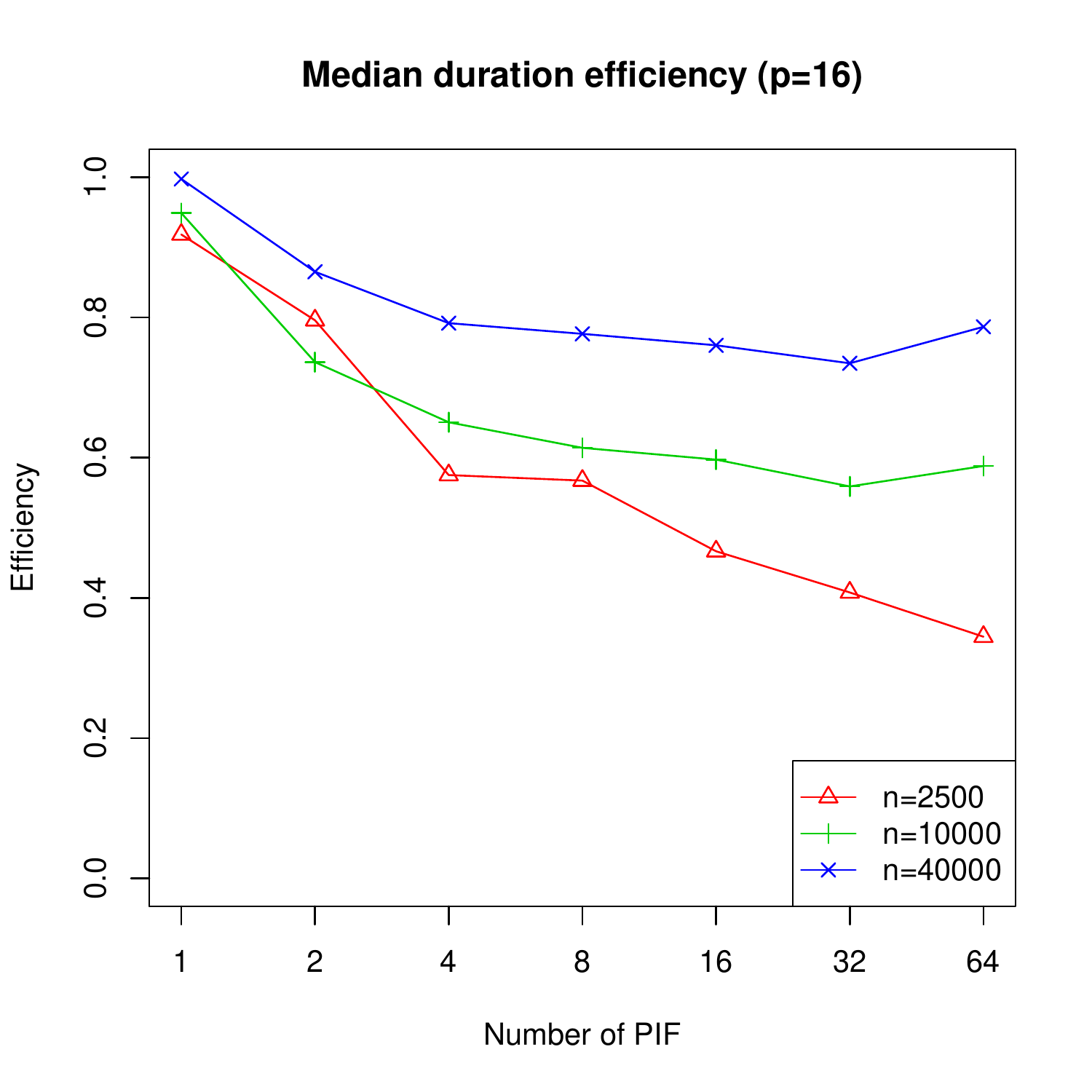}}}
  \caption{\label{fig:efficiency} \copif Efficiency }
\end{figure*}

\section{Conclusion and Future Work}
\label{sec:Conclusion}

In this paper we provide a self-stabilized collaborative algorithm called {\em \copif} allowing to 
check the truthfulness of a distributed prefix tree.  
\copif implementation in a P2P service discovery framework is experimentally validate, in \emph{qualitative} and in \emph{quantitative} terms.
Experiment demonstrates the efficiency of \copif w.r.t. {\it classic}-PIF.
\copif overhead represents a small part of the number of exchange messages and of the time spend, 
specially on huge data structures.

We conjecture that the stabilization time is in $O(h^2)$ rounds and the worst case time to merge several classic-PIF
waves is in $O(h)$ rounds, $h$ being the height of the tree. 
We plan to experimentally validate this two complexities. Indeed, experiment were driven considering 
no corrupted \copif variables. 
In order to do that, we need to define a model of failure, implement or reuse a fault injector 
and couple it with \textsc{Sbam} before driving a new experiment campaign.

\section*{Acknowledgment}

This research is funded by french National Research Agency
(08-ANR-SEGI-025). Details of the project on
\url{http://graal.ens-lyon.fr/SPADES}. Experiments presented in this
paper were carried out using the Grid'5000 experimental testbed, being
developed under the INRIA ALADDIN development action with support from
CNRS, RENATER and several Universities as well as other funding bodies
(see \url{https://www.grid5000.fr}).

\bibliographystyle{plain}
\bibliography{COPIF}

\clearpage

\appendix

\end{document}